\documentclass[10pt,journal,epsfig]{IEEEtran}


\usepackage[dvips]{graphicx}
\usepackage{graphicx}
\usepackage{amssymb}
\usepackage{cite}
\usepackage{amsmath}
\usepackage{algorithm}
\usepackage{algorithmic}
\usepackage{multirow}
\usepackage[table]{xcolor}
\usepackage{subfigure}
\usepackage{makecell}
\usepackage{diagbox}
\usepackage{array}
\usepackage{threeparttable}
\usepackage{graphicx}
\graphicspath{ {./images/} }
\usepackage{caption,epstopdf}

\newcounter{MYtempeqncnt}
\allowdisplaybreaks
\allowbreak
\begin{document}

\title{Joint Transceiver Optimization for MmWave/THz MU-MIMO ISAC Systems}
\author{Peilan Wang, Jun Fang, Xianlong Zeng, Zhi Chen,
Hongbin Li,~\IEEEmembership{Fellow,~IEEE}
\thanks{Peilan Wang, Jun Fang, Xianlong Zeng, and Zhi Chen are with National Key
Laboratory of Wireless Communications, University of
Electronic Science and Technology of China, Chengdu 611731, China,
(Email: peilan\_wangle@uestc.edu.cn, JunFang@uestc.edu.cn, xlong\_zeng@std.uestc.edu.cn, chenzhi@uestc.edu.cn
}
\thanks{Hongbin Li is with the Department of Electrical and Computer Engineering,
Stevens Institute of Technology, Hoboken, NJ 07030, USA, E-mail:
Hongbin.Li@stevens.edu}
}

\maketitle

\begin{abstract}
In this paper, we consider the problem of joint transceiver design
for millimeter wave (mmWave)/Terahertz (THz) multi-user MIMO
integrated sensing and communication (ISAC) systems. Such a
problem is formulated into a nonconvex optimization problem, with
the objective of maximizing a weighted sum of communication users'
rates and the passive radar's signal-to-clutter-and-noise-ratio
(SCNR). By exploring a low-dimensional subspace property of the
optimal precoder, a low-complexity block-coordinate-descent
(BCD)-based algorithm is proposed. Our analysis reveals that the
hybrid analog/digital beamforming structure can attain the same
performance as that of a fully digital precoder, provided that the
number of radio frequency (RF) chains is no less than the number
of resolvable signal paths. Also, through expressing the precoder
as a sum of a communication-precoder and a sensing-precoder, we
develop an analytical solution to the joint transceiver design problem by
generalizing the idea of block-diagonalization (BD) to the ISAC
system. Simulation results show that with a proper tradeoff
parameter, the proposed methods can achieve a decent compromise
between communication and sensing, where the performance of each
communication/sensing task experiences only a mild performance loss as
compared with the performance attained by optimizing exclusively
for a single task.
\end{abstract}



\begin{keywords}
Integrated sensing and communication, mmWave, THz, hybrid
precoding/beamforming.
\end{keywords}



\section{Introduction}
The sixth generation (6G) wireless network is envisioned to
support not only full-dimensional \emph{wireless connectivity} but
also enhanced \emph{sensing} capabilities
\cite{chen2021Terahertz,tan2022THz}. Wireless communication and
radar sensing have flourished as different disciplines due to
their different objectives. Recently, with the development of
millimeter-wave (mmWave)/terahertz (THz) communications as well as
large-scale antenna arrays, mmWave/THz communication systems and
radar sensing systems are now sharing many similarities in
hardware structures, channel characteristics and signal processing
techniques. Consequently, integrated sensing and communication
(ISAC) is emerging as a paradigm-shifting concept with a great
potential in revolutionizing both fields
\cite{liu2022Survey,liu2023Integrated}.



Joint transceiver design is a key problem in mmWave/THz ISAC
systems due to the inevitable competition for communication and
sensing resources and the subsequent interference management
\cite{liu2022Survey,xie2022Perceptive}. For mmWave and THz
communications, massive antennas are employed at the base station
(BS) and/or user equipments (UE) to compensate for the severe path
loss incurred by mmWave/THz signals. Moreover, due to hardware and
power constraints, hybrid analog and digital structures with a
small number of radio frequency (RF) chains are usually adopted
\cite{WangFang20a,GhoshThomas14,RanganRappaport14,chen2021Terahertz,tan2022THz}.
The large-scale antenna array along with a hybrid
precoder/combiner structure makes joint transceiver design for
ISAC systems a challenging problem. In addition, as the joint
transceiver design has to be performed for each interval of
channel coherence time that could be less than several
milliseconds \cite{AdhikaryAl14}, low-complexity transceiver
optimization algorithms are highly desirable in practical systems.



A plethora of studies
\cite{qi2022Hybrid,cheng2022QoSAware,wang2022PartiallyConnected,
barneto2022Beamformer,cheng2023DoublePhaseShifter,cheng2021Hybrid,
liu2019Hybrid,elbir2021TerahertzBanda,islam2022Integrated} have
been made to jointly devising transmit precoder and receive
combiner for mmWave/THz ISAC systems. The existing body of
literature can be roughly classified into three categories based
on their design criteria. Communication-oriented approaches put
their priority on communication tasks and try to improve the
sensing accuracy given that the communication performance is
guaranteed. Meanwhile, sensing-oriented approaches
\cite{cheng2023DoublePhaseShifter,cheng2021Hybrid} attempt to
optimize the communication performance given that a certain
sensing performance is attained. In addition to these approaches,
some other works
\cite{liu2019Hybrid,elbir2021TerahertzBanda,cheng2021Hybrida,islam2022Integrated}
attempt to achieve a decent tradeoff between communication and
sensing performance without explicitly considering any performance
constraints on sensing or communication. Among them,
\cite{liu2019Hybrid,elbir2021TerahertzBanda} proposed to find a
precoder for mmWave/THz ISAC systems such that the precoder is as
close as possible to the optimal precoder for communication and
also as close as possible to the optimal precoder for sensing.
Other tradeoffs between communication and sensing were also
exploited. For instance, \cite{cheng2021Hybrida} proposed to
optimize the tradeoff between the weighted sum rate (WSR) and the
radar beam pattern matching error, while
\cite{islam2022Integrated} aimed to maximize the sum of
communication and sensing signal-to-noise-ratios (SNRs).



In this paper, we consider the problem of joint transceiver design
for mmWave/THz multi-user MIMO ISAC systems, where a multi-antenna
base station (BS) serves multiple users and a passive radar
receiver is deployed to detect targets by processing reflections
from BS's downlink communication signals. Our objective is to
optimize the transmit precoder along with the receive combiner
such that the communication and sensing performance can be well
compromised. To more accurately characterize the communication and
sensing performance, a metric in terms of users' weighted sum rate
is used to evaluate the communication performance and a metric in
terms of the signal-to-clutter-and-noise-ratio (SCNR) is used to
evaluate the sensing performance. We aim at maximizing the sum of
communication users' sum rate and the radar's received SCNR
subject to a certain unit modulus and transmit power constraints.
To tackle such a non-convex optimization problem, we propose two
solutions, namely, a block-coordinate-descent (BCD)-based method
which alternatively optimizes each block variable given other
variables fixed, and an analytical solution that generalizes the
block-diagonalization (BD) idea to the ISAC systems for joint
transceiver design. Both methods enjoy a low computational
complexity that scales linearly with the number of antennas at the
BS. Specifically, a low-dimensional subspace property associated
with the optimal precoder is explored to reduce the complexity of
the BCD method. In addition, this low-dimensional subspace
property also sheds light on the minimum number of RF chains that
is required to achieve a fully-digital precoding performance for
ISAC systems.

The rest of paper is organized as follows. In Section
\ref{sec-sys}, the system model and the problem formulation are
discussed. In Section \ref{sec-BCD}, by exploring the
low-dimensional subspace property, the BCD-based method is
developed for joint transceiver design. Next, a simple analytical
joint transceiver design method is proposed in Section
\ref{sec-linear}. Simulation results are provided in Section
\ref{sec-simu}, followed by concluding remarks in Section
\ref{sec-conclu}.

\begin{figure}[!t]
\centering
\includegraphics[width=3.4in] {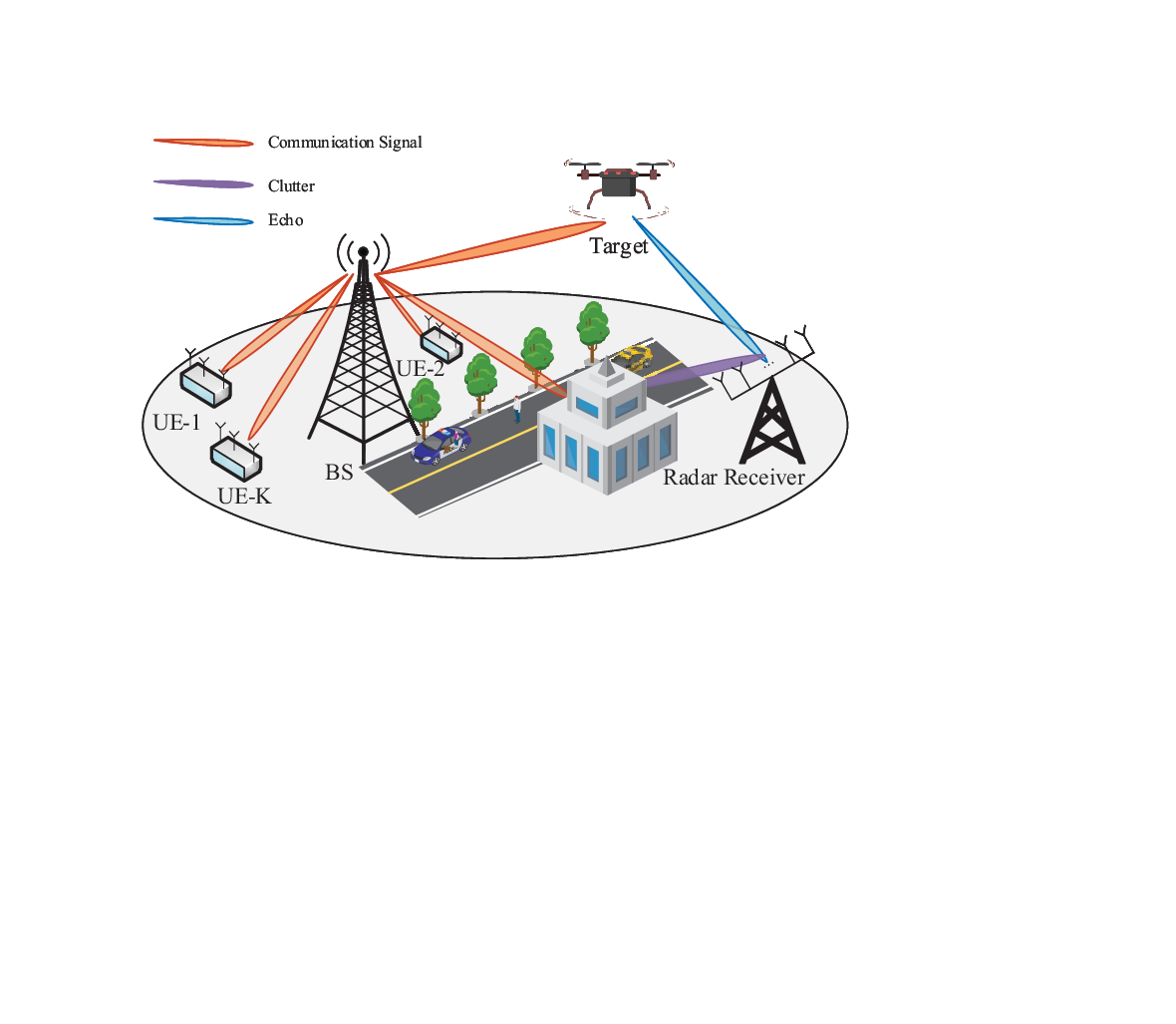}
\caption{Illustration of mmWave/THz MU-MIMO ISAC systems}
\label{System_fig}
\end{figure}

\section{System Model}\label{sec-sys}
In this paper, we consider an ISAC system where a multi-antenna BS
serves $K$ multi-antenna UEs, and a passive radar receiver is
deployed to detect a point-like target by processing reflections
from BS's downlink communication signals. To avoid
self-interference, we consider a bistatic setup where the BS and
the radar receiver are geographically separated (see Fig.
\ref{System_fig}). The BS and the radar are cable-connected such
that they can exchange necessary information to facilitate
detection and tracking of the target.


To reduce the hardware cost, we consider a hybrid analog and
digital structure at both the BS and UEs. Specifically, the BS is
equipped with $N_t$ antennas and $N_{t}^{\rm RF}$ radio frequency
(RF) chains, and each UE is equipped with $N_r$ antennas and
$N_r^{\rm RF}$ RF chains. The BS first employs a digital precoder
$ \boldsymbol{F}_{{\rm BB },k} \in \mathbb C^{N_{t}^{\rm RF}
\times N_s}$ to precode the $k$th user's transmitted symbol
$\boldsymbol{s}_k \in \mathbb C^{N_s}$, where $N_s$ denotes the
number of data streams and the transmitted symbol
$\boldsymbol{s}_k$ is assumed to satisfy $\mathbb E
[\boldsymbol{s}_k \boldsymbol{s}_k^H] = \boldsymbol{I}_{N_s}$. The
signal is then processed by an analog precoder
$\boldsymbol{F}_{\rm RF}  \in \mathbb C^{N_{t} \times N_{t}^{\rm
RF}}$ with its elements satisfying unit modulus constraints. The
baseband signal forwarded to the $k$th user can be written as
$\boldsymbol{x}_k = \boldsymbol{F}_{\rm RF} \boldsymbol{F}_{{\rm
BB },k} \boldsymbol{s}_k$, and the transmitted signal can be
expressed as
\begin{align}
\boldsymbol{x} = \sum_{k=1}^K \boldsymbol{F}_k \boldsymbol{s}_k,
\label{eq-x}
\end{align}
where $\boldsymbol{F}_k \triangleq \boldsymbol{F}_{\rm RF}
\boldsymbol{F}_{{\rm BB },k} $, and the transmit power constraint
is given by $ \mathbb E [  \| \boldsymbol{x} \|_2^2] =
\sum_{k=1}^K \text{Tr} (\boldsymbol{F}_k \boldsymbol{F}_k ^H) \leq
P_t$ with $P_t$ denoting the maximum transmit power.

\subsection{MU-MIMO Communication Performance}
The transmitted signal arrives at the $k$th user via propagating
through the channel between the BS and the $k$th UE
$\boldsymbol{H}_k \in \mathbb C^{N_r \times N_t}$. Due to the
small wavelength, mmWave/THz channels exhibit sparse scattering
characteristics and can be characterized by the Saleh-Valenzuela
(S-V) model \cite{AdhikaryAl14,chen2021Terahertz}. Suppose both the BS and the UE are equipped
with uniform linear arrays (ULA), $\boldsymbol{H}_k$ can be
modeled as
\begin{align}
\boldsymbol{H}_k = \sum_{l_k=1}^{L_k} \beta_{l_k}
\boldsymbol{a}_r(\gamma_{l_k}) \boldsymbol{a}_t^H(\varphi_{l_k}),
\end{align}
where $L_k$ denotes the total number of signal paths,
$\beta_{l_k}$ is the complex gain associated with the $l_k$th
path, $\varphi_{l_k}$ and $\gamma_{l_k}$ respectively denote the
angle-of-departure (AoD) and the angle-of-arrival (AoA) associated
with the $l_k$th path, $\boldsymbol{a}_r (\cdot)$ and
$\boldsymbol{a}_t(\cdot)$ denote the normalized receive and
transmit array steering vectors at the UE and the BS,
respectively.

The signal received by the $k$th UE is given by
\begin{align}
\bar{\boldsymbol{y}}_k = \boldsymbol{H}_k \boldsymbol{F}_k
\boldsymbol{s}_k + \sum_{j\neq k}^K  \boldsymbol{H}_k
\boldsymbol{F}_{j} \boldsymbol{s}_j +  \boldsymbol{n}_k,
\end{align}
where $\boldsymbol{n}_k \sim \mathcal{CN}(0,\sigma_k^2
\boldsymbol{I})$ is the additive white Gaussian noise (AWGN). The
received signal is first processed by an analog combiner
$\boldsymbol{W}_{{\rm RF},k} \in \mathbb C^{N_r \times N_{r}^{\rm
RF}}$ and then followed by a lower-dimensional digital combiner $
\boldsymbol{W}_{{\rm BB},k} \in \mathbb C^{N_{r}^{\rm RF} \times
N_s}$. Define $\boldsymbol{W}_k\triangleq\boldsymbol{W}_{{\rm RF},k}
\boldsymbol{W}_{{\rm BB},k} \in \mathbb C^{N_r \times N_s}$, the
received baseband signal is thus given as
\begin{align}
\boldsymbol{y}_k = \boldsymbol{W}_k^H \boldsymbol{H}_k
\boldsymbol{F}_k \boldsymbol{s}_k + \sum_{j\neq k}^K
\boldsymbol{W}_k^H \boldsymbol{H}_k \boldsymbol{F}_{j}
\boldsymbol{s}_j + \boldsymbol{W}_k^H \boldsymbol{n}_k.
\end{align}
Accordingly, the achievable rate of the $k$th user is given by
\begin{align}
R_k = \log_2 {\rm det} \left( \boldsymbol{I} + \boldsymbol{W}_k^H
\boldsymbol{H}_k \boldsymbol{F}_k \boldsymbol{F}_k^H
\boldsymbol{H}_k^H \boldsymbol{W}_k \boldsymbol{J}_k^{-1}
\right),\label{eq-Rk}
\end{align}
where
\begin{align}
\boldsymbol{J}_k \triangleq\sum_{j\neq k}^K
\boldsymbol{W}_k^H\boldsymbol{H}_k \boldsymbol{F}_j
\boldsymbol{F}_j^H \boldsymbol{H}_k^H \boldsymbol{W}_k +
\sigma_k^2 \boldsymbol{W}_k^H\boldsymbol{W}_k. \label{eq-Jk}
\end{align}
The WSR can be calculated as
\begin{align}
    R ( \boldsymbol{F}_k,\boldsymbol{W}_k)= \sum_{k=1}^K w_k R_k, \label{eq-R}
\end{align}
where $w_k$ denotes a weight characterizing the priority of the
$k$th UE. Note that the system parameters satisfy $KN_s \leq
N_{t}^{\rm RF} \leq N_{t}$ and $N_s \leq N_r^{\rm RF} \leq N_r$.

\subsection{MIMO Sensing Performance}
The radar receiver is equipped with $N_{\rm sen}$ antennas.
Assuming a point-like target is located at angle $\theta_0$ with
respect to (w.r.t.) the BS and angle $\varphi_0$ w.r.t the radar
receiver,  the target response matrix is expressed as
\begin{align}
\boldsymbol{H}_A = \alpha_0 \boldsymbol{a}_{s}(\varphi_0)
\boldsymbol{a}_t^H(\theta_0),
\end{align}
where $\boldsymbol{a}_s(\cdot) \in \mathbb C^{N_{\rm sen}}$
denotes the normalized receive array steering vector at the radar
receiver, $\alpha_0$ is the complex gain incorporating the target
radar cross section (RCS) as well as the path loss. Here,
$\alpha_0$ is assumed to be a zero mean complex Gaussian random
variable with variance $\sigma_A^2$ \cite{xie2022Perceptive}.  In addition to the
echo back-scattered from the target, the radar receiver also
receives echoes from the environment \cite{xie2022Perceptive,chen2022generalized}, i.e., the clutter.
Suppose there are $I$ clutter patches \footnote{The line-of-sight
(LOS) path from the BS to the radar receiver can also be regarded
as a strong clutter patch.}. The response matrix associated with
the $i$th clutter is given as
\begin{align}
\boldsymbol{H}_{B,i} = \alpha_i  \boldsymbol{a}_s(\varphi_i)
\boldsymbol{a}_t^H(\theta_i),
\end{align}
where $\alpha_{i} \sim \mathcal{CN}(0,\sigma_{B_i}^2)$ is the
complex gain associated with the $i$th clutter patch,
$\varphi_i$ and $\theta_i$ denote the AoA and AoD at the radar
receiver and BS, respectively. The received signal at the radar
receiver is given by
\begin{align}
\boldsymbol{y}_0 =&  \boldsymbol {H}_{A} \boldsymbol{x} +
\sum_{i=1}^{I}\boldsymbol{H}_{B,i} \boldsymbol{x}  +
\boldsymbol{n}_0,
\end{align}
where $\boldsymbol{x}$ is the communication signal defined in
\eqref{eq-x}, and $\boldsymbol{n}_0 \in \mathbb C^{N_{\rm sen}
\times 1} $ is the AWGN satisfying $\mathcal{CN}(0, \sigma^2
\boldsymbol{I})$.

Let $\boldsymbol{w} \in \mathbb C^{N_{\rm sen}}$ denote the radar
receive beamforming vector. The output of the radar receiver is
given by
\begin{align}
r = \boldsymbol{w}^H \boldsymbol{y}_0 = \boldsymbol{w}^H
\boldsymbol {H}_{A} \boldsymbol{x} +  \boldsymbol{w}^H
\sum_{i=1}^{I}\boldsymbol{H}_{B,i} \boldsymbol{x}  +
\boldsymbol{w}^H \boldsymbol{n}_0.
\end{align}
For sensing tasks, the SCNR
determines both detection and localization performance \cite{xie2022Perceptive,chen2022generalized}.
Therefore we use SCNR as a metric to evaluate the sensing
performance. Specifically, the SCNR is defined as the ratio of the
average received signal power to the average received clutter
power plus the noise power \cite{xie2022Perceptive,cheng2023DoublePhaseShifter}, i.e.,
\begin{align}
{\text{SCNR}} (\boldsymbol{w}, \boldsymbol{F}_k) \stackrel{(a)}=&
\frac{\mathbb E[ |\boldsymbol{w}^H \boldsymbol {H}_A \boldsymbol{x} |^2]}
{\mathbb E[|\boldsymbol{w}^H \sum_{i=1}^I\boldsymbol{H}_{B,i} \boldsymbol{x}
+ \boldsymbol{w}^H \boldsymbol{n}_0|^2]} \nonumber \\
=& \frac{\boldsymbol{w}^H  \boldsymbol {A}  \boldsymbol{F}
\boldsymbol {F}^H  \boldsymbol{A}^H \boldsymbol{w}}{
\boldsymbol{w}^H \left(  \sum_{i=1}^I \boldsymbol{B}_i \boldsymbol
{F} \boldsymbol {F}^H \boldsymbol {B}_i^H + \sigma^2
\boldsymbol{I}   \right) \boldsymbol{w}  } , \label{eq-SCNR}
\end{align}
where the expectation in $(a)$ is taken over the transmitted
signal $\boldsymbol{x}$ as well as the random target/clutter
response matrices, and $\boldsymbol{A}$, $\boldsymbol{B}_i$ and
$\boldsymbol{F}$ are respectively defined as
\begin{align}
\boldsymbol{A} = &\sigma_A \boldsymbol{a}_{s}(\varphi_0) \boldsymbol{a}_t^H(\theta_0), \\
\boldsymbol{B}_i = &\sigma_{B_i}  \boldsymbol{a}_s(\varphi_i) \boldsymbol{a}_t^H(\theta_i) , \forall i\\
\boldsymbol{F} =& [\boldsymbol{F}_1 \phantom{0} \boldsymbol{F}_2
\ldots \boldsymbol{F}_K] \in \mathbb C^{N_t \times KN_s}.
\end{align}
Note that the statistical information of the target and the
clutter patches $\{\sigma_A^2,\sigma_{B,i}^2\}$ are assumed known
\emph{a priori}. In practice, they can be estimated based on a
cognitive paradigm \cite{karbasi2015knowledge}.

\subsection{Problem Formulation}
As discussed earlier, the communication performance is measured by
the WSR while the sensing performance is determined by the SCNR.
In this paper, we hope to achieve a tradeoff between the
communication and sensing performance by jointly optimizing the
hybrid precoder/combiner $\{\boldsymbol{F}_k,\boldsymbol{W}_k\}$
and the radar receive beamforming vector $\boldsymbol{w}$. The
optimization problem can be formulated as
\begin{align}
\max_{ \{\boldsymbol{F}_{k},\boldsymbol{W}_{k}\}_{k=1}^K,\boldsymbol{w}}
\quad & \rho_c R( \boldsymbol{F}_k,\boldsymbol{W}_k) +
\rho_s\text{SCNR}  (\boldsymbol{w}, \boldsymbol{F}_k) \nonumber \\
{\text{s.t.}} \quad  &   \sum_{k=1}^K {\rm tr}(\boldsymbol{F}_k \boldsymbol{F}_k^H) \leq P_t, \nonumber \\
& \boldsymbol{F}_{k} = \boldsymbol{F}_{\rm RF} \boldsymbol{F}_{{\rm BB },k}, \nonumber\\
& \boldsymbol{W}_{k} = \boldsymbol{W}_{{\rm RF},k} \boldsymbol{W}_{{\rm BB },k}, \nonumber\\
 & |(\boldsymbol{F}_{\rm RF})_{i,j}| = 1, \nonumber\\
 & |(\boldsymbol{W}_{\rm RF})_{i,j}| = 1, \label{eq-opt-1}
\end{align}
where $R(\boldsymbol{F}_k,\boldsymbol{W}_k) $ is defined in
\eqref{eq-R}, $ {\text{SCNR}} (\boldsymbol{w}, \boldsymbol{F}_k)$
is defined in \eqref{eq-SCNR}. The weighting coefficients $\rho_c
= \frac{\eta}{\text{cons}_1} , \rho_s = \frac{1-\eta}{
\text{cons}_2}$ are used to control the tradeoff between
communication and sensing performance, in which $\text{cons}_1$
and $\text{cons}_2$ are predefined constants used to reduce the
weighting effects caused by the large difference in values between
SCNR and WSR. When $\eta =1$, the system focuses only on the
communication performance. On the contrary, when $\eta = 0$, the
sensing performance is exclusively optimized. As $\eta$ varies
from 0 to 1, the performance tradeoff can be characterized by the
optimal solution of \eqref{eq-opt-1}.


The optimization problem \eqref{eq-opt-1} is challenging due to
its nonconvexity as well as coupling among optimization variables.
In the following, we propose a low-complexity block-coordinated
descent-based method to solve \eqref{eq-opt-1}.

\section{Proposed BCD-Based Method}\label{sec-BCD}
For simplicity, we first ignore the constraints placed by the
hybrid structure at the transceiver and consider a fully digital
precoder/combiner $\boldsymbol{F}_k$/$\boldsymbol{W}_k$. After the
optimal fully digital precoder/combiner is obtained, an efficient
manifold optimization-based algorithm is employed to find hybrid
precoder/combiner to approximate the optimal digital
precoder/combiner.

\subsection{Problem Reformulation}
By ignoring the constraint imposed by hybrid structures, the
problem \eqref{eq-opt-1} is simplified as
\begin{align}
\max_{ \{\boldsymbol{F}_{k}  ,\boldsymbol{W}_{k}\}_{k=1}^K,\boldsymbol{w}}
\quad & \rho_c R ( \boldsymbol{F}_k,\boldsymbol{W}_k)   +
\rho_s \text{SCNR}(\boldsymbol{w}, \boldsymbol{F}_k) \nonumber \\
{\text{s.t.}} \quad  &   \sum_{k=1}^K {\rm tr}(\boldsymbol{F}_k
\boldsymbol{F}_k^H) \leq P_t. \label{eq-opt-simp0}
\end{align}
Before proceeding, we consider the problem of optimizing
$\{\boldsymbol{F}_k \}_{k=1}^K$, given fixed
$\{\boldsymbol{W}\}_{k=1}^K$ and $\boldsymbol{w}$, in which case
the problem becomes
\begin{align}
\max_{ \{\boldsymbol{F}_{k}\}_{k=1}^K} \quad &
\rho_c R ( \boldsymbol{F}_k,\boldsymbol{W}_k)   +
\rho_s \text{SCNR}  (\boldsymbol{w}, \boldsymbol{F}_k) \nonumber \\
{\text{s.t.}} \quad  &   \sum_{k=1}^K {\rm tr}(\boldsymbol{F}_k
\boldsymbol{F}_k^H) \leq P_t. \label{eq-opt-simp}
\end{align}
We first obtain an interesting and useful result regarding the
low-dimensional subspace property of $\{ \boldsymbol{F}_k
\}_{k=1}^K$ by exploiting the sparse characteristics of mmWave/THz
channels.

\newtheorem{Proposition}{Proposition}
\begin{Proposition}
Denote
\begin{align}
\boldsymbol{V} = & [ \boldsymbol{a}_t(\varphi_{1}) \ldots
\boldsymbol{a}_t(\varphi_{L_K}) \phantom{0}
\boldsymbol{a}_t(\theta_0) \phantom{0}
\boldsymbol{a}_t(\theta_1)\ldots \boldsymbol{a}_t(\theta_I)]  \in
\mathbb C^{N_t \times r}
\end{align}
as the matrix consisting of the steering vectors from the BS to all UEs,
the steering vectors from the BS to the target and the steering
vectors from the BS to all clutter patches, where we have
$r\triangleq\sum_{k=1}^K L_k + I +1 $. Then, any non-trivial
Karush-Kuhn-Tucker (KKT) point of \eqref{eq-opt-simp} can be
expressed as
\begin{align}
\boldsymbol{F}_k = \boldsymbol{V} \boldsymbol{X}_k, \quad \forall
k \label{Fk-Xk}
\end{align}
where $\boldsymbol{X}_k  \in \mathbb C^{r \times N_{s}}$.
\label{proposition1}
\end{Proposition}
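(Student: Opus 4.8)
The plan is to exploit the fact that the objective of \eqref{eq-opt-simp} depends on the precoders $\{\boldsymbol{F}_k\}$ only through their projections onto the column space of $\boldsymbol{V}$, and then to read off the subspace structure directly from the stationarity part of the KKT system. First I would verify this factorization. In the rate $R_k$ of \eqref{eq-Rk}–\eqref{eq-Jk}, every occurrence of a precoder enters through a product $\boldsymbol{H}_k\boldsymbol{F}_j$; since $\boldsymbol{H}_k=\sum_{l_k}\beta_{l_k}\boldsymbol{a}_r(\gamma_{l_k})\boldsymbol{a}_t^H(\varphi_{l_k})$, this product is a linear function of the rows $\{\boldsymbol{a}_t^H(\varphi_{l_k})\boldsymbol{F}_j\}$, i.e. of $\boldsymbol{V}^H\boldsymbol{F}_j$. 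Likewise, in the SCNR \eqref{eq-SCNR} the precoder enters only through $\boldsymbol{A}\boldsymbol{F}=\sigma_A\boldsymbol{a}_s(\varphi_0)\bigl(\boldsymbol{a}_t^H(\theta_0)\boldsymbol{F}\bigr)$ and $\boldsymbol{B}_i\boldsymbol{F}=\sigma_{B_i}\boldsymbol{a}_s(\varphi_i)\bigl(\boldsymbol{a}_t^H(\theta_i)\boldsymbol{F}\bigr)$, each of which depends only on one row of $\boldsymbol{V}^H\boldsymbol{F}$. Hence the entire objective can be written as $\phi(\{\boldsymbol{V}^H\boldsymbol{F}_k\})$ for some function $\phi$, because all transmit steering directions appearing in the problem are precisely the columns of $\boldsymbol{V}$.

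Next I would form the Lagrangian $\mathcal{L}=\rho_c R+\rho_s\,\text{SCNR}-\mu\bigl(\sum_{k}{\rm tr}(\boldsymbol{F}_k\boldsymbol{F}_k^H)-P_t\bigr)$ with multiplier $\mu\geq 0$, and take the Wirtinger gradient with respect to $\boldsymbol{F}_k^*$. By the chain rule applied to the factorization above, the gradient of the objective equals $\boldsymbol{V}\boldsymbol{G}_k$ for some $\boldsymbol{G}_k\in\mathbb{C}^{r\times N_s}$ (namely the gradient of $\phi$ with respect to $(\boldsymbol{V}^H\boldsymbol{F}_k)^*$), so it lies in the column space of $\boldsymbol{V}$. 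The power-constraint term contributes $-\mu\boldsymbol{F}_k$. Setting the total gradient to zero yields the stationarity condition $\mu\boldsymbol{F}_k=\boldsymbol{V}\boldsymbol{G}_k$. Whenever $\mu>0$ this relation immediately gives $\boldsymbol{F}_k=\boldsymbol{V}(\boldsymbol{G}_k/\mu)\triangleq\boldsymbol{V}\boldsymbol{X}_k$, which is exactly the claimed representation \eqref{Fk-Xk}.

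The only delicate point, and the main obstacle, is ruling out $\mu=0$ at a non-trivial KKT point. I would argue that the power constraint is active with a strictly positive multiplier. Under a common scaling $\boldsymbol{F}_k\mapsto c\boldsymbol{F}_k$, the SCNR takes the form $|c|^2 S/(|c|^2 C+\sigma^2)$, which is strictly increasing in $|c|$ whenever the received target-signal power $S$ is nonzero, while the weighted rate is non-decreasing since only the noise term $\sigma_k^2\boldsymbol{W}_k^H\boldsymbol{W}_k$ in $\boldsymbol{J}_k$ stays fixed under scaling. Thus the directional derivative of the objective along the scaling direction is strictly positive. A stationary point with $\mu=0$ would force $\boldsymbol{V}\boldsymbol{G}_k=0$ for all $k$, i.e. a vanishing objective gradient, which is incompatible with this strictly improving direction unless $S=0$, corresponding to the degenerate all-zero (trivial) solution. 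Excluding that case—which is what \emph{non-trivial} refers to—gives $\mu>0$ and completes the proof. I expect the factorization and gradient steps to be routine, with this activeness-of-the-constraint argument being the part that needs the most care to state cleanly.
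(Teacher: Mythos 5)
Your overall skeleton---(i) the objective's gradient with respect to $\boldsymbol{F}_k$ lies in $\mathcal{R}(\boldsymbol{V})$, (ii) the power-constraint multiplier is strictly positive at any non-trivial KKT point, (iii) stationarity then forces $\boldsymbol{F}_k=\boldsymbol{V}\boldsymbol{X}_k$---matches the paper's proof. Your step (i), done by writing the objective as a function of $\{\boldsymbol{V}^H\boldsymbol{F}_k\}$ and applying the chain rule, is correct and is a cleaner packaging of what the paper does by explicitly computing the gradients \eqref{grad-Rk}--\eqref{grad-Fk} and observing they all have the form $\boldsymbol{H}_k^H(\cdot)$, $\boldsymbol{A}^H(\cdot)$ or $\boldsymbol{B}_i^H(\cdot)$, hence lie in $\mathcal{R}(\boldsymbol{H}^H)=\mathcal{R}(\boldsymbol{V})$. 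Where you genuinely diverge is step (ii): the paper proves $\lambda^{\star}>0$ by contradiction through an algebraic identity---left-multiplying the stationarity condition \eqref{condi-first} by $(\boldsymbol{F}_k^{\star})^H$, summing over $k$, and using trace identities to reach \eqref{eqn-big}--\eqref{eq-tr-fi}, whose equality can hold only at a trivial point---whereas you use a scaling/monotonicity argument along the radial direction. The scaling idea is legitimate and arguably more transparent, but as written it has a gap.

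The gap is your closing claim that $\mu=0$ is ``incompatible with this strictly improving direction unless $S=0$, corresponding to the degenerate all-zero (trivial) solution.'' That identification is wrong under the paper's definition of triviality: a trivial point requires \emph{both} $\boldsymbol{H}_k\boldsymbol{F}_k=\boldsymbol{0}$ and $\boldsymbol{A}\boldsymbol{F}_k=\boldsymbol{0}$ for all $k$ (zero WSR \emph{and} zero SCNR). A non-trivial KKT point can perfectly well have $S=\|\boldsymbol{w}^H\boldsymbol{A}\boldsymbol{F}\|_2^2=0$ while the rate is strictly positive---for instance when the precoders have no component along $\boldsymbol{a}_t(\theta_0)$, or when the fixed radar combiner satisfies $\boldsymbol{w}^H\boldsymbol{a}_s(\varphi_0)=0$ (recall $\boldsymbol{w}$ and $\{\boldsymbol{W}_k\}$ are held fixed in \eqref{eq-opt-simp}, so $S=0$ does not imply $\boldsymbol{F}=\boldsymbol{0}$ or even $\boldsymbol{A}\boldsymbol{F}=\boldsymbol{0}$). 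For such points your argument produces no contradiction: the SCNR term contributes zero derivative along the scaling direction, and for the rate you only assert ``non-decreasing,'' which is compatible with a vanishing gradient. The repair is to prove strictness for the communication term as well: writing $R_k(c)=\log\det\left(\boldsymbol{I}+\boldsymbol{S}_k\left(\boldsymbol{T}_k+\sigma_k^2\boldsymbol{W}_k^H\boldsymbol{W}_k/c^2\right)^{-1}\right)$ with $\boldsymbol{S}_k=\boldsymbol{W}_k^H\boldsymbol{H}_k\boldsymbol{F}_k\boldsymbol{F}_k^H\boldsymbol{H}_k^H\boldsymbol{W}_k$, the derivative at $c=1$ is strictly positive whenever $\boldsymbol{S}_k\neq\boldsymbol{0}$ (same computation as your SCNR derivative). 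Then conclude by a case split: at any non-trivial point either some effective communication signal $\boldsymbol{W}_k^H\boldsymbol{H}_k\boldsymbol{F}_k$ or the effective target return $\boldsymbol{w}^H\boldsymbol{A}\boldsymbol{F}$ is nonzero, so at least one term has strictly positive radial derivative while the others are non-decreasing, contradicting $\mu=0$. (Strictly speaking this requires reading ``non-trivial'' in terms of these effective, combiner-projected quantities; the paper's own equality analysis in \eqref{eq-tr-fi} has the same looseness, so this is not a defect unique to your argument, but your proof must at least cover the $S=0$, positive-rate case, which it currently does not.)
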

\begin{proof}
See Appendix \ref{appendixA}.
\end{proof}

Define
\begin{align}
\boldsymbol{H}\triangleq& [\boldsymbol{H}_1^T\phantom{0}
\ldots\phantom{0} \boldsymbol{H}_K^T\phantom{0}
\boldsymbol{A}^T\phantom{0} \boldsymbol{B}_1^T\phantom{0}
\ldots\phantom{0}\boldsymbol{B}_{I}^T]^T \in \mathbb C^{M \times
N_t},
\end{align}
where $M=KN_r + (I+1)N_{\rm sen}$. It can be readily verified that
$\boldsymbol{V}$ and $\boldsymbol{H}^H$ have the same range space,
i.e., $\mathcal{R}(\boldsymbol{V}) = \mathcal{R}(\boldsymbol{H}^H)
$. The low-dimensional subspace property for communication systems
has been verified by classical linear precoders \cite{zhao2023Rethinking}, e.g.,  the
maximum-ratio-transmission (MRT) precoder $\boldsymbol{F}_{\rm
MRT} = \boldsymbol{H}^H $ and the zero-forcing (ZF) precoder
$\boldsymbol{F}_{\rm ZF} = \boldsymbol{H}^H
(\boldsymbol{H}\boldsymbol{H}^H)^{-1}$. Here, we show that the
precoder $\boldsymbol{F}$ for ISAC systems also exhibits this
intriguing low-dimensional subspace property.

Define $\bar{\boldsymbol{H}}_k \triangleq \boldsymbol{H}_k
\boldsymbol{V} \in \mathbb C^{N_r \times r}$,
$\bar{\boldsymbol{A}} \triangleq \boldsymbol{A}\boldsymbol{V} \in
\mathbb C^{N_{\rm sen} \times r}$, $\bar{\boldsymbol{B}}_i
\triangleq \boldsymbol{B}_i \boldsymbol{V} \in \mathbb C^{N_{\rm
sen} \times r}$, $\tilde{\boldsymbol{H}} \triangleq
\boldsymbol{V}^H\boldsymbol{V} \in \mathbb C^{r \times r}$. By
invoking Proposition \ref{proposition1}, the problem
\eqref{eq-opt-simp0} can be rewritten as
\begin{align}
\max_{ \{\boldsymbol{X}_{k}  ,\boldsymbol{W}_{k}\}_{k=1}^K,\boldsymbol{w}}
\quad & \rho_c R ( \boldsymbol{X}_k,\boldsymbol{W}_k)   +
\rho_s \text{SCNR}  (\boldsymbol{w}, \boldsymbol{X}_k) \nonumber \\
{\text{s.t.}} \quad  &   \sum_{k=1}^K {\rm
tr}(\tilde{\boldsymbol{H}}\boldsymbol{X}_k \boldsymbol{X}_k^H)
\leq P_t. \label{eq-opt-X}
\end{align}
where
\begin{align}
&R ( \boldsymbol{X}_k,\boldsymbol{W}_k) \nonumber \\
=  &\sum_{k=1}^K w_k\log_2 {\rm det} \bigg( \boldsymbol{I} +
\boldsymbol{W}_k^H \bar{\boldsymbol{H}}_k \boldsymbol{X}_k
\boldsymbol{X}_k^H \bar{\boldsymbol{H}}_k^H \boldsymbol{W}_k \nonumber \\
& \times \big( \sum_{j\neq k}^K
\boldsymbol{W}_k^H\bar{\boldsymbol{H}}_k \boldsymbol{X}_j
\boldsymbol{X}_j^H \bar{\boldsymbol{H}}_k^H \boldsymbol{W}_k +
\sigma_k^2  \boldsymbol{W}_k^H\boldsymbol{W}_k \big)^{-1} \bigg),
\end{align}
and
\begin{align}
{\text{SCNR}}(\boldsymbol{w}, \boldsymbol{X}_k) =
\frac{\boldsymbol{w}^H  \bar{\boldsymbol {A}}  \boldsymbol{X}
\boldsymbol {X}^H  \bar{\boldsymbol{A}}^H \boldsymbol{w}}{
\boldsymbol{w}^H \left( \sum_{i=1}^I \bar{\boldsymbol{B}}_i
{\boldsymbol {X}} \boldsymbol {X}^H \bar{\boldsymbol {B}}_i^H +
\sigma^2 \boldsymbol{I} \right) \boldsymbol{w}  }
\end{align}
in which $\boldsymbol{X}\triangleq
[\boldsymbol{X}_1\phantom{0}\ldots\phantom{0}\boldsymbol{X}_K] \in
\mathbb C^{r\times KN_s}$.

In \eqref{eq-opt-X}, the optimization variable $\boldsymbol{F} \in
\mathbb C^{N_t \times KN_s}$ is replaced by $\boldsymbol{X} \in
\mathbb C^{r \times KN_s}$. Note that, due to limited scattering
characteristics of mmWave/THz signals, the dimension $r$ is
usually smaller than the number of antennas $N_t$ at the BS. As a
result, Proposition \ref{proposition1} allows us to remarkably
reduce the complexity of the proposed algorithm, as will be
elaborated in Section \ref{sec-prop-BCd}.

Furthermore, it should be noted that the optimal solution to
\eqref{eq-opt-X} will always be on the boundary of the quadratic
constraint, which is known as the \emph{full power property}.
Using this property, the constraint in \eqref{eq-opt-X} can be
removed and absorbed into the objective function, which yields
\begin{align}
\max_{ \{\boldsymbol{X}_{k}
,\boldsymbol{W}_{k}\}_{k=1}^K,\boldsymbol{w}} \quad & \rho_c
\bar{R} ( \boldsymbol{X}_k,\boldsymbol{W}_k)   +  \rho_s
\overline{\text{SCNR} } (\boldsymbol{w}, \boldsymbol{X}_k),
\label{eq-opt-unc}
\end{align}
where
\begin{align}
& \bar{R} ( \boldsymbol{X}_k,\boldsymbol{W}_k) \nonumber \\ =&
\sum_{k=1}^K w_k\log_2 {\rm det} \bigg( \boldsymbol{I} +
\boldsymbol{W}_k^H \bar{\boldsymbol{H}}_k \boldsymbol{X}_k
\boldsymbol{X}_k^H \bar{\boldsymbol{H}}_k^H \boldsymbol{W}_k \nonumber \\
& \times \bigg( \sum_{j\neq k}^K \boldsymbol{W}_k^H\bar{\boldsymbol{H}}_k
\boldsymbol{X}_j \boldsymbol{X}_j^H \bar{\boldsymbol{H}}_k^H \boldsymbol{W}_k \nonumber \\
&+ \frac{\sigma_k^2}{P_t}  \sum_{i=1}^K{\rm
tr}(\tilde{\boldsymbol{H}} \boldsymbol{X}_i \boldsymbol{X}_i^H)
\boldsymbol{W}_k^H\boldsymbol{W}_k \bigg)^{-1} \bigg),
\label{eq-R-X}
\end{align}
and
\begin{align}
& \overline{\text{SCNR} }(\boldsymbol{w}, \boldsymbol{X}_k) \nonumber \\
= & \frac{\boldsymbol{w}^H  \bar{\boldsymbol {A}}  \boldsymbol{X}
\boldsymbol {X}^H  \bar{\boldsymbol{A}}^H \boldsymbol{w}}{
\boldsymbol{w}^H \left(\sum_{i=1}^I( \bar{\boldsymbol{B}}_i
{\boldsymbol {X}} \boldsymbol {X}^H \bar{\boldsymbol {B}}_i^H) +
\frac{ \sigma^2}{P_t}  \sum_{i=1}^K{\rm
tr}(\tilde{\boldsymbol{H}}\boldsymbol{X}_i \boldsymbol{X}_i^H)
\boldsymbol{I}_r   \right) \boldsymbol{w}  }.
\end{align}
Next, we show that the optimal solution of \eqref{eq-opt-X} can be
obtained by solving \eqref{eq-opt-unc}.

\begin{Proposition}
\label{proposition-scaling} Denote $(\{\boldsymbol{W}_k^{\rm
opt_1},\boldsymbol{X}_k^{\rm opt_1} \}_{k=1}^K,\boldsymbol{w}^{\rm
opt_1})$ as the optimal solution to \eqref{eq-opt-X} and
$(\{\boldsymbol{W}_k^{\rm opt_2},\boldsymbol{X}_k^{\rm opt_2}
\}_{k=1}^K,\boldsymbol{w}^{\rm opt_2})$ as the optimal solution to
\eqref{eq-opt-unc}. Then, we have
\begin{align}
\boldsymbol{W}_k^{\rm opt_1} = \boldsymbol{W}_k^{\rm opt_2},
\boldsymbol{w}^{\rm opt_1} = \boldsymbol{w}^{\rm opt_2},
\boldsymbol{X}_k^{\rm opt_1} = \alpha
\boldsymbol{X}_k^{\rm opt_2},\forall k,
\end{align}
where
\begin{align}
\alpha = \sqrt{ \frac{P_t}{\sum_{k=1}^K {\rm
tr}(\tilde{\boldsymbol{H}}\boldsymbol{X}_k^{\rm opt_2}
(\boldsymbol{X}_k^{\rm opt_2})^H)      }   }.
\end{align}
\end{Proposition}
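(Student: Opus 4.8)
The plan is to exploit two structural features that are already present in the text: the \emph{full power property} noted for \eqref{eq-opt-X}, and a \emph{scale invariance} of the homogenized objective in \eqref{eq-opt-unc}. First I would verify that the barred objective $\rho_c \bar R + \rho_s \overline{\text{SCNR}}$ is invariant under a common scaling $\boldsymbol{X}_k \mapsto c\,\boldsymbol{X}_k$ (for all $k$) with $\{\boldsymbol{W}_k,\boldsymbol{w}\}$ held fixed. This is immediate by inspection of \eqref{eq-R-X}: the signal term, the interference term, and the substituted noise term $\frac{\sigma_k^2}{P_t}\sum_i \text{tr}(\tilde{\boldsymbol{H}}\boldsymbol{X}_i\boldsymbol{X}_i^H)\,\boldsymbol{W}_k^H\boldsymbol{W}_k$ all scale as $c^2$, so every SINR matrix inside the $\log\det$ is unchanged; the same $c^2$-homogeneity cancels in the numerator and denominator of $\overline{\text{SCNR}}$. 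Hence the barred objective is homogeneous of degree zero in $\{\boldsymbol{X}_k\}$. Second, I would record that the homogenization is exact on the full-power sphere: when $\sum_k \text{tr}(\tilde{\boldsymbol{H}}\boldsymbol{X}_k \boldsymbol{X}_k^H)=P_t$, the substituted noise terms collapse to $\sigma_k^2 \boldsymbol{W}_k^H\boldsymbol{W}_k$ and $\sigma^2\boldsymbol{I}_r$, so the barred objective coincides pointwise with the original objective of \eqref{eq-opt-X}.

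With these two facts the equivalence follows by a two-sided value comparison. Given the unconstrained optimum $(\{\boldsymbol{W}_k^{\text{opt}_2},\boldsymbol{X}_k^{\text{opt}_2}\},\boldsymbol{w}^{\text{opt}_2})$, rescaling the precoders by $\alpha$ places them exactly on the full-power sphere while, by scale invariance, leaving the barred objective value unchanged; on that sphere the barred value equals the original objective, so $(\{\boldsymbol{W}_k^{\text{opt}_2},\alpha\boldsymbol{X}_k^{\text{opt}_2}\},\boldsymbol{w}^{\text{opt}_2})$ is feasible for \eqref{eq-opt-X} and attains the unconstrained optimal value. Therefore the constrained optimum is at least this large. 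Conversely, the constrained optimum is at full power by the full-power property, so there the original objective equals the barred objective; since \eqref{eq-opt-unc} is unconstrained, this point is a legitimate competitor and the unconstrained optimum is at least the constrained one. The two optimal values thus coincide, and the rescaled unconstrained solution is a genuine constrained optimizer.

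Finally I would read off the stated relations. Because the rescaling touches only $\{\boldsymbol{X}_k\}$ and not $(\{\boldsymbol{W}_k\},\boldsymbol{w})$, and because once the full-power precoders are fixed the two problems share the \emph{identical} objective as a function of $(\{\boldsymbol{W}_k\},\boldsymbol{w})$, the optimal combiners and radar beamformer must agree, giving $\boldsymbol{W}_k^{\text{opt}_1}=\boldsymbol{W}_k^{\text{opt}_2}$ and $\boldsymbol{w}^{\text{opt}_1}=\boldsymbol{w}^{\text{opt}_2}$, while the precoders differ only by the normalizing factor $\boldsymbol{X}_k^{\text{opt}_1}=\alpha\boldsymbol{X}_k^{\text{opt}_2}$.

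I expect the main obstacle to be precisely this last identification step. Strictly speaking, the value-comparison argument only produces \emph{a} constrained optimizer of the stated form, so to claim it for \emph{the} optimal solutions I would either invoke essential uniqueness of the optimizer (up to the common scaling) or restate the proposition as exhibiting a matched pair of optima related by $\alpha$. Care is also needed to ensure $\alpha$ is well defined, i.e. that $\sum_k\text{tr}(\tilde{\boldsymbol{H}}\boldsymbol{X}_k^{\text{opt}_2}(\boldsymbol{X}_k^{\text{opt}_2})^H)>0$; this holds for any non-trivial optimizer, since a vanishing precoder yields zero received signal power and hence cannot be optimal for either problem.
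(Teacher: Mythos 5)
Your proof is correct and follows essentially the same route as the paper's own proof in Appendix~\ref{appendix-scaling}: rescale the unconstrained optimum onto the full-power sphere, exploit the degree-zero homogeneity of the barred objective together with its coincidence with the original objective at full power, and conclude via a two-sided value comparison (the paper's chain of equalities and the inequality is exactly this argument). Your closing caveat about producing \emph{a} constrained optimizer rather than \emph{the} optimizer is a fair point, but it applies equally to the paper's own proof, which implicitly reads the proposition as exhibiting a matched pair of optima related by the scaling $\alpha$.
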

\begin{proof}
See Appendix \ref{appendix-scaling}.
\end{proof}

So far we have converted the problem \eqref{eq-opt-simp0} into a
low-dimensional unconstrained problem \eqref{eq-opt-unc}.
Nevertheless, it is still difficult to solve \eqref{eq-opt-unc}
due to coupling among optimization variables. To address this
issue, we use the following tricks to simplify the WSR term and
the SCNR term. Specifically, by resorting to the WMMSE technique\cite{zhao2023Rethinking},
the WSR term $\bar{R} ( \boldsymbol{X}_k,\boldsymbol{W}_k) $ in
\eqref{eq-R-X} can be formulated into the following optimization:
\begin{align}
& \bar{R} ( \boldsymbol{X}_k,\boldsymbol{W}_k) \nonumber \\
=& \max_{\boldsymbol{\Lambda}_k \succ 0, \boldsymbol{E}_{k}} \quad
\sum_{k=1}^K w_k \bigg( \log_2 {\rm det}( \boldsymbol{\Lambda}_k)
- {\rm tr} ( \boldsymbol{\Lambda}_k \boldsymbol{E}_k ) \bigg) +
KN_s, \label{BCD-prop}
\end{align}
where $\{ \boldsymbol{\Lambda}_k \}_{k=1}^K$ are auxiliary
variables and
\begin{align}
\boldsymbol{E}_{k}
=& (\boldsymbol{I} - \boldsymbol{W}_k^H\bar{ \boldsymbol{H}}_k \boldsymbol{X}_k)
(\boldsymbol{I} - \boldsymbol{W}_k^H\bar{ \boldsymbol{H}}_k \boldsymbol{X}_k) ^H \nonumber \\
&+ \boldsymbol{W}_k \bar{\boldsymbol{J}_k} \boldsymbol{W}_k^H, \label{eq-Ek}
\end{align}
in which
\begin{align}
\bar{\boldsymbol{J}_k} =& \sum_{j=1,j\neq k}^K  \bar{\boldsymbol{H}}_k
\boldsymbol{X}_j \boldsymbol{X}_j^H \bar{\boldsymbol{H}}_k^H + \frac{\sigma_k^2}{P_t} \sum_{i=1}^K{\rm tr}(
\tilde{\boldsymbol{H}} \boldsymbol{X}_i \boldsymbol{X}_i^H)
\boldsymbol{I}. \label{eq-mse-2}
\end{align}

On the other hand, the SCNR term can be written as
\begin{align}
&\frac{\boldsymbol{w}^H \bar{\boldsymbol {A} } \boldsymbol{X}
\boldsymbol {X}^H  \bar{\boldsymbol {A} } \boldsymbol{w}}{
\boldsymbol{w}^H \left( \sum_{i=1}^I( \bar{\boldsymbol{B}}_i {\boldsymbol {X}}
\boldsymbol {X}^H \bar{\boldsymbol {B}}_i^H) + \frac{\sigma^2}{P_t}
\sum_{i=1}^K{\rm tr}( \tilde{\boldsymbol{H}} \boldsymbol{X}_i
\boldsymbol{X}_i^H)\boldsymbol{I}   \right) \boldsymbol{w}  } \nonumber \\
\triangleq &\frac{   f_1 ({\boldsymbol{X}}, \boldsymbol{w} )}{f_2
({\boldsymbol{X}}, \boldsymbol{w})}. \label{eq-opt-SCNR}
\end{align}
By introducing an auxiliary vector $\boldsymbol{u}
=[\boldsymbol{u}_1^T \phantom{0} \boldsymbol{u}_2^T \phantom{0}
\ldots \phantom{0} \boldsymbol{u}_K^T]^T \in \mathbb C^{KN_s}$
with $\boldsymbol{u}_k \in \mathbb C^{N_s}$, the SCNR term is
equivalent to solving the following optimization \cite{ShenYu18}
\begin{align}
\overline{\text{SCNR} } (\boldsymbol{w}, \boldsymbol{X}_k) =\max_{
\boldsymbol{u}} \; 2\Re \{ \boldsymbol{w}^H
\bar{\boldsymbol{A}} \boldsymbol{X} \boldsymbol{u} \} - \|
\boldsymbol{u}\|_2^2  f_2( {\boldsymbol{X}}, \boldsymbol{w}) .
\label{SCRN-prop}
\end{align}
Combining \eqref{BCD-prop} and \eqref{SCRN-prop}, the problem
\eqref{eq-opt-unc} is finally simplified as
\begin{align}
\min_{\{ \boldsymbol{\Lambda}_k,\boldsymbol{W}_k,
\boldsymbol{X}_k\}_{k=1}^K, \boldsymbol{w},\boldsymbol{u}} \; & \sum_{k=1}^K
\rho_c w_k \left({\rm tr} \left(\boldsymbol{\Lambda}_k
\boldsymbol{E}_{k}\right) - \log {\rm det}
(\boldsymbol{\Lambda}_k) \right) \nonumber \\ & -\rho_s \left(
2\Re \{ \boldsymbol{w}^H \bar{\boldsymbol{A}} \boldsymbol{X}
\boldsymbol{u} \} - \| \boldsymbol{u}\|_2^2  f_2( \boldsymbol{X},
\boldsymbol{w})\right). \label{eq-opt-final}
\end{align}
Although the objective function of \eqref{eq-opt-final} is
nonconvex, it is convex over each individual optimization variable
when the other variables are fixed. Hence, a
block-coordinate-descent (BCD) method can be applied to solve the
problem \eqref{eq-opt-final}.

\subsection{The Proposed BCD-Based Algorithm}
\label{sec-prop-BCd} In the sequel, we update each block variable
while fixing the others. Since each subproblem is convex, its
optimal solution can be obtained in a closed-form by checking the
first-order optimality condition. The details are given as
follows.

\subsubsection{Update $\boldsymbol{w}$}
Given other variables fixed, the optimization of $\boldsymbol{w}$
can be formulated as
\begin{align}
    \min_{\boldsymbol{w}} \quad -
2\Re \{ \boldsymbol{w}^H \boldsymbol{A} \boldsymbol{X}
\boldsymbol{u} \} + \| \boldsymbol{u}\|_2^2  f_2( \boldsymbol{X},
\boldsymbol{w}),
\end{align}
where $f_2( \boldsymbol{X}, \boldsymbol{w})$ is defined in
\eqref{eq-opt-SCNR}. By checking its first-order optimality
condition, we can update $\boldsymbol{w}$ via
\begin{align}
\boldsymbol{w} =&  \frac{1}{\| \boldsymbol{u}\|_2^2 }\left(
\sum_{i=1}^I( \bar{\boldsymbol{B}}_i {\boldsymbol {X}} \boldsymbol
{X}^H \bar{\boldsymbol {B}}_i^H)+ \frac{\sigma^2}{P_t}\sum_{k=1}^K
{\rm tr}( \tilde{\boldsymbol{H}}\boldsymbol{X}_k
\boldsymbol{X}_k^H) \boldsymbol{I}   \right)^{-1} \nonumber \\ &
\times \bar{\boldsymbol{A} } \boldsymbol{X} \boldsymbol{u}.
\label{eq-ite-w}
\end{align}

\subsubsection{Update $\boldsymbol{W}_k$}
Given other variables fixed, the optimization of
$\boldsymbol{W}_k$ is simplified as
\begin{align}
    \min_{\boldsymbol{W}_k} \quad {\rm tr}(\boldsymbol{\Lambda}_k \boldsymbol{E}_k),
\end{align}
where $\boldsymbol{E}_k$ is defined in \eqref{eq-Ek}.
Hence, the update of $\boldsymbol{W}_k$ is given by
\begin{align}
\boldsymbol{W}_k =& {\left( \sum_{i=1}^K \bar{\boldsymbol{H}}_k
\boldsymbol{X}_i \boldsymbol{X}_i^H \bar{\boldsymbol{H}}_k^H +
\frac{\sigma_k^2}{P_t}  \sum_{i=1}^K{\rm tr}( \tilde{
\boldsymbol{H}}\boldsymbol{X}_i \boldsymbol{X}_i^H) \boldsymbol{I}
\right)^{-1}}_{} \nonumber \\ & \times \bar{\boldsymbol{H}}_k
\boldsymbol{X}_k. \label{eq-ite-W}
\end{align}

\subsubsection{Update $\boldsymbol{\Lambda}_k$}
Given other variables fixed, the update of
$\boldsymbol{\Lambda}_k$ is given by
\begin{align}
\boldsymbol{\Lambda}_k^{} =( \boldsymbol{I} -
\boldsymbol{W}_k^H\bar{\boldsymbol{H}}_k\boldsymbol{X}_k)^{-1}. \label{eq-ite-Lam}
\end{align}

\subsubsection{Update $\boldsymbol{X}_k$}
Given other variables fixed, the optimization of
$\boldsymbol{X}_k$ is formulated as
\begin{align}
\min_{\boldsymbol{X}_k} \; &  \sum_{k=1}^K \left( \rho_c w_k {\rm tr}
\left( \boldsymbol{\Lambda}_k \boldsymbol{E}_{k}  \right) \right)  +
\rho_s \| \boldsymbol{u}\|_2^2   {\rm tr} ( \boldsymbol{X}^H {\tilde{\boldsymbol{B}}} \boldsymbol{X})  \nonumber \\
 & - 2\rho_s   \Re \{ \boldsymbol{w}^H \bar{\boldsymbol{A}}
 \boldsymbol{X} \boldsymbol{u} \}  + \frac{\rho_s \sigma^2}{P_t}
 \| \boldsymbol{u}\|_2^2 \| \boldsymbol{w}\|_2^2
 \sum_{i=1}^K {\rm tr}(\tilde{\boldsymbol{H}}\boldsymbol{X}_i \boldsymbol{X}_i^H),
\end{align}
where $\boldsymbol{E}_k$ is defined in \eqref{eq-Ek} and
$\tilde{\boldsymbol{B}}= \sum_{i=1}^I \bar{\boldsymbol{B}}_i^H
\boldsymbol{w} \boldsymbol{w}^H \bar{\boldsymbol{B}}_i$.
Similarly, by checking its first-order optimality condition, the update
of $\boldsymbol{X}_k$ is given by
\begin{align}
&\boldsymbol{X}_k \nonumber \\ =& \bigg(  \rho_c \sum_{i=1}^K w_i
\bar{\boldsymbol{H}}_i^H \boldsymbol{W}_i  \boldsymbol{\Lambda}_i
\boldsymbol{W}_i^H \bar{\boldsymbol{H}}_i + \rho_s \| \boldsymbol{u}\|_2^2 {\tilde{\boldsymbol{B}}} \nonumber \\
&  + \bigg( \rho_c  \sum_{i=1}^K\frac{ w_i \sigma_i^2}{P_t} {\rm tr}
( \boldsymbol{\Lambda}_i \boldsymbol{W}_i^H \boldsymbol{W}_i  ) +
\frac{\rho_s\sigma^2}{P_t} \| \boldsymbol{u}\|_2^2 \| \boldsymbol{w}\|_2^2
\bigg)\tilde{\boldsymbol{H}} \bigg)^{-1}  \nonumber \\
&\times \left( \rho_c w_k \bar{\boldsymbol{H}}_k^H
\boldsymbol{W}_k \boldsymbol{\Lambda}_k +
\rho_s\bar{\boldsymbol{A}}^H \boldsymbol{w} \boldsymbol{u}_k^H
\right). \label{eq-ite-X}
\end{align}

\subsubsection{Update $\boldsymbol{u}_k$}
Given other variables fixed, we can update $\boldsymbol{u}_k$ by
solving
\begin{align}
    \min_{\boldsymbol{u}_k} \quad  -
2\Re \{ \boldsymbol{w}^H \boldsymbol{A} \boldsymbol{X}
\boldsymbol{u} \} + \| \boldsymbol{u}\|_2^2  f_2( \boldsymbol{X},
\boldsymbol{w})
\end{align}
where $f_2( \boldsymbol{X},
\boldsymbol{w})$ is defined in \eqref{eq-opt-SCNR}. The optimal solution can be calculated as
\begin{align}
\boldsymbol{u}_k=  \frac{{\boldsymbol{X}_k^H}
\bar{\boldsymbol{A}}^H \boldsymbol{w}}{ \boldsymbol{w}^H
\mathring{\boldsymbol{B}}\boldsymbol{w} + \frac{\sigma^2}{P_t}
\sum_{i=1}^K {\rm tr}( \tilde{\boldsymbol{H}}  \boldsymbol{X}_i
\boldsymbol{X}_i^H) \boldsymbol{w}^H \boldsymbol{w}
},\label{eq-ite-u}
\end{align}
where
$\mathring{\boldsymbol{B}}=\sum_{i=1}^I(\bar{\boldsymbol{B}}_i
{\boldsymbol {X}} \boldsymbol {X}^H \bar{\boldsymbol {B}}_i^H)$.

Finally, let $(\{\boldsymbol{W}_k^{\star},
\boldsymbol{X}_k^{\star},\boldsymbol{\Lambda}_k^{\star}\}_{k=1}^K,
\boldsymbol{w}^{\star},\boldsymbol{u}^{\star})$ denote the
solution to problem \eqref{eq-opt-final}. Then the solution to
problem \eqref{eq-opt-simp0} is given as
$(\{\boldsymbol{W}_k^{\star}, \boldsymbol{F}_k^{\star} \}_{k=1}^K,
\boldsymbol{w}^{\star})$, where
\begin{align}
\boldsymbol{F}_k^{\star} = {\alpha}^{\star}
\boldsymbol{V}\boldsymbol{X}_k, \quad  \alpha^{\star} =
\sqrt{\frac{P_t}{\sum_{k=1}^K {\rm tr}(\tilde{\boldsymbol{H}}
\boldsymbol{X}_i^{\star} (\boldsymbol{X}_i^{\star})^H)}}.
\label{eq-opt-alpha}
\end{align}

For clarity, we summarize the proposed BCD-based method in
Algorithm \ref{Algorithm1}. Since the BCD generates a
non-decreasing sequence, the proposed algorithm is guaranteed to
converge to a stationary point of the optimization problem.

Now, we analyze the computational complexity of the proposed
BCD-based method. Calculating $\tilde{\boldsymbol{H}}$ in
\eqref{eq-opt-X} has complexity $\mathcal{O}(N_t r^2)$, which is
linear in $N_t$. By examining each iteration of the proposed
BCD-based method, we see that the complexity of one iteration is
dominated by the matrix inverse operation involved in
\eqref{eq-ite-w}, \eqref{eq-ite-W}, \eqref{eq-ite-Lam}, and
\eqref{eq-ite-X}. Among them, the dominant term lies in
\eqref{eq-ite-X}, where the matrix inverse operation involves a
computation complexity in the order of $\mathcal{O}(r^3)$, where
$r$, defined in Proposition \ref{proposition1}, represents the
total number of signal paths from the BS to terminals (including
UEs and radar receiver) as well as clutter patches. As a
comparison, the traditional BCD-based framework requires to
perform $N_t$-dimensional matrix inverse, resulting in a
computational complexity in the order of $\mathcal{O}(N_t^3)$.
Note that for mmWave/THz systems, the number of antennas at the
BS, i.e., $N_t$, could be up to hundreds or even thousands in
order to compensate for the severe path loss, whereas $r$ is
usually small due to sparse scattering characteristic of
mmWave/THz channels. As a result, the proposed BCD method can
achieve a remarkable computational complexity reduction as
compared to traditional BCD-based method, while achieving the same
performance.

\begin{algorithm}[H]   
\caption{Proposed BCD-based algorithm for digital transceiver design}
\label{Algorithm1}
\begin{algorithmic}[1]
\REQUIRE Any initial point $\boldsymbol{X}$ satisfying the power
constraint in \eqref{eq-opt-X}, $\boldsymbol{\Lambda}_k =
\boldsymbol{I}, \boldsymbol{W}_k = \boldsymbol{I}, \forall k$, and
$\boldsymbol{w}$. Set the tolerance of accuracy $\epsilon$;
\REPEAT \STATE Update the radar receive beamforming vector
$\boldsymbol{w}$ via \eqref{eq-ite-w}; \STATE update the combiner
$\boldsymbol{W}_k, \forall k$ via \eqref{eq-ite-W}; \STATE update
the auxiliary variable $\boldsymbol{\Lambda}_k, \forall k$ via
\eqref{eq-ite-Lam}; \STATE update the auxiliary precoder
$\boldsymbol{X}_k, \forall k$ via \eqref{eq-ite-X}; \STATE update
the auxiliary varible $\boldsymbol{u}_k, \forall k$ via
\eqref{eq-ite-u}; \UNTIL{ The absolute difference of objective
function value in two consecutive iterations are smaller than
$\epsilon$ ;} \STATE Normalize the digital precoder
$\boldsymbol{F}_k^{\star}, \forall k$ via \eqref{eq-opt-alpha}
\RETURN{The optimized point
 $(\{\boldsymbol{W}_k^{\star}, \boldsymbol{F}_k^{\star} \}_{k=1}^K, \boldsymbol{w}^{\star})$}
\end{algorithmic}
\end{algorithm}

\subsection{Hybrid Precoder/Combiner Design}
\label{sec-hybrid} In this section, we employ a classical least
square-based approach to find hybrid precoder/combiner to
approximate the fully-digital precoder/combiner. Here, we focus on
the precoder design. The idea can be straightforwardly extended to
the combiner design. A natural approach is to directly minimize
the Euclidean distance between the hybrid precoder and the optimal
digital precoder obtained in Section \ref{sec-prop-BCd}. The
problem can be formulated as
\begin{align}
\min_{\boldsymbol{F}_{\rm RF}, \{\boldsymbol{F}_{{\rm BB},k}\}_{k=1}^K}
\quad &\sum_{k=1}^K \| \boldsymbol{F}_k^{\star} -
\boldsymbol{F}_{{\rm RF}} \boldsymbol{F}_{{\rm BB},k} \|_F^2 \nonumber \\
{\text{s.t.}} \quad & | \boldsymbol{F}_{{\rm RF}}(i,j) |= 1 \nonumber \\
& \sum_{k=1}^K \| \boldsymbol{F}_{{\rm RF}} \boldsymbol{F}_{{\rm
BB},k} \|_F^2 \leq P_t.\label{eq-approx}
\end{align}

\subsubsection{Initialization}
A good initial point is essential for expediting the convergence
speed of the algorithm. For hybrid precoding, we have
$\boldsymbol{F}_k = \boldsymbol{F}_{{\rm RF}} \boldsymbol{F}_{{\rm
BB},k}, \forall k$. This formulation implies that all hybrid
precoders lie in the range space of $\boldsymbol{F}_{{\rm RF}}$,
i.e., $\mathcal{R}(\boldsymbol{F}_k) \subset
\mathcal{R}(\boldsymbol{F}_{{\rm RF}}),\forall k$. Denote
$\boldsymbol{F}^{\star} = [ \boldsymbol{F}^{\star}_1 \phantom{0}
\boldsymbol{F}^{\star}_2 \ldots  \boldsymbol{F}^{\star}_K] \in
\mathbb C^{N_t \times KN_s}  $ as the matrix comprising all
optimized digital precoders, where each $\boldsymbol{F}_k^{\star}$
is given in \eqref{eq-opt-alpha}. The analog precoder
$\boldsymbol{F}_{\rm RF}$ is supposed to satisfy
$\mathcal{R}(\boldsymbol{F}^{\star}) \subset
\mathcal{R}(\boldsymbol{F}_{\rm RF})$ to achieve optimal
performance. Nevertheless, such a condition may not be satisfied
due to the limited number of RF chains. To obtain a decent
approximation performance, we hope to choose
$\mathcal{R}(\boldsymbol{F}_{\rm RF} )$ to maximize the dimension
of $\mathcal{R}(\boldsymbol{F}^{\star})  \cap
\mathcal{R}(\boldsymbol{F}_{\rm RF})$. To this goal, denote the
ordered SVD of $\boldsymbol{F}^{\star} = \tilde{\boldsymbol{U}}
\tilde{\boldsymbol{\Sigma}} \tilde{\boldsymbol{V}}^H$, the initial
point of $(\boldsymbol{F}_{\rm RF}^{\rm ini}, \{
\boldsymbol{F}_{{\rm BB},k} \}_{k=1}^{K})$ is given by
\begin{align}
\boldsymbol{F}_{\rm RF}^{\rm ini} =
\exp(j\angle(\tilde{\boldsymbol{U}}(:,1:N_t^{\rm RF})),
\boldsymbol{F}_{{\rm BB},k} =c_k(\boldsymbol{F}_{\rm RF}^{\rm
ini})^{\dagger} \boldsymbol{F}_{k}^{\star}, \forall k,
\end{align}
where $\angle (\cdot)$ denotes the element-wise argument operator,
$\tilde{\boldsymbol{U}}(:,1:N_t^{\rm RF})$ denotes the submatrix
formed by extracting the first $N_t^{\rm RF}$ columns of
$\tilde{\boldsymbol{U}}$, $(\boldsymbol{F}_{\rm RF}^{\rm
ini})^{\dagger}$ is the Moore-Penrose pseudo-inverse of
$\boldsymbol{F}_{\rm RF}^{\rm ini} $, and $c_k$ is a normalization
factor accounting for the transmit power constraint.

\subsubsection{Manifold Optimization-Based Algorithm}
The problem in \eqref{eq-approx} can be solved by a fast manifold
optimization-based algorithm \cite{WangFang20a,Kasai18}, in which we optimize
$\boldsymbol{F}_{\rm RF}$ and $\boldsymbol{F}_{{\rm BB},k},
\forall k$ in an alternating manner. In particular, optimization
of $\boldsymbol{F}_{\rm RF}$ is solved on a complex circle
manifold which is the product of $N_tKN_s$ complex circles. The
reader may refer to \cite{Kasai18} for more details. It should be
mentioned that the algorithm in \cite{Kasai18} is guaranteed to converge
to a critical point with a computational complexity at the order
of $\mathcal{O}(N_tN_{t}^{\rm RF} KN_s)$.

\subsubsection{How Many RF Chains Are Required?}
As discussed earlier, the performance of employing a hybrid
precoder $\boldsymbol{F}_k = \boldsymbol{F}_{{\rm RF}}
\boldsymbol{F}_{{\rm BB},k}, \forall k$ is inherently constrained
by the number of RF chains available at the BS. A fundamental
question is: for the considered ISAC system, how many RF chains
are required in order to achieve the same performance attained by
the fully-digital precoder. We have the following result
concerning the above question.

\begin{Proposition}
For hybrid precoder ISAC systems, it is sufficient that the number of RF chains
satisfies
\begin{align}
N_{t}^{\rm RF} \geq r
\end{align}
in order to achieve a performance of a fully digital precoder.
Here $r$, defined in Proposition \ref{proposition1}, denotes the
total number of resolvable paths from the BS to terminals as well
as clutters. Specifically, if the clutter nulling is achieved by
radar receive beamforming, then this condition can be further
relaxed as $N_{t}^{\rm RF} \geq\tilde{r} = r- I$ .
\label{proposition-hybrid-suff}
\end{Proposition}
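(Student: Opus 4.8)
The plan is to show that $N_t^{\rm RF} \geq r$ RF chains allow the hybrid precoder to \emph{exactly} reproduce the fully-digital precoder, so that the approximation error in \eqref{eq-approx} is driven to zero. The starting point is Proposition \ref{proposition1}: at any non-trivial KKT point, the optimal digital precoder satisfies $\boldsymbol{F}_k^{\star} = \alpha^{\star} \boldsymbol{V}\boldsymbol{X}_k^{\star}$ for all $k$, as made explicit in \eqref{eq-opt-alpha}. Stacking these columns gives $\mathcal{R}(\boldsymbol{F}^{\star}) \subseteq \mathcal{R}(\boldsymbol{V})$, where $\boldsymbol{V}$ has exactly $r$ columns. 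Since the hybrid factorization $\boldsymbol{F}_k = \boldsymbol{F}_{\rm RF}\boldsymbol{F}_{{\rm BB},k}$ forces $\mathcal{R}(\boldsymbol{F}_k) \subseteq \mathcal{R}(\boldsymbol{F}_{\rm RF})$, a sufficient condition for zero approximation error is $\mathcal{R}(\boldsymbol{V}) \subseteq \mathcal{R}(\boldsymbol{F}_{\rm RF})$.

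The crucial observation is that every column of $\boldsymbol{V}$ is a ULA transmit steering vector $\boldsymbol{a}_t(\cdot)$, whose entries all have constant modulus. Hence, after a trivial rescaling, each such steering vector can serve directly as a column of the analog precoder without violating the unit-modulus constraint $|(\boldsymbol{F}_{\rm RF})_{i,j}| = 1$. I would therefore set the first $r$ columns of $\boldsymbol{F}_{\rm RF}$ to be $\sqrt{N_t}\,\boldsymbol{a}_t(\varphi_{1}),\ldots,\sqrt{N_t}\,\boldsymbol{a}_t(\theta_I)$ (filling any remaining columns with arbitrary unit-modulus vectors when $N_t^{\rm RF} > r$). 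This requires precisely $N_t^{\rm RF} \geq r$ RF chains and yields $\mathcal{R}(\boldsymbol{V}) \subseteq \mathcal{R}(\boldsymbol{F}_{\rm RF})$, the desired containment. Taking $\boldsymbol{F}_{{\rm BB},k} = \boldsymbol{F}_{\rm RF}^{\dagger} \boldsymbol{F}_k^{\star}$ then gives $\boldsymbol{F}_{\rm RF}\boldsymbol{F}_{{\rm BB},k} = \boldsymbol{F}_k^{\star}$ exactly, because $\boldsymbol{F}_k^{\star} \in \mathcal{R}(\boldsymbol{F}_{\rm RF})$. Thus the hybrid structure attains the fully-digital objective value.

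For the relaxed bound under clutter nulling, I would revisit which steering vectors actually enter the subspace. Suppose the radar combiner satisfies $\boldsymbol{w}^H\boldsymbol{a}_s(\varphi_i) = 0$ for every clutter patch $i = 1,\ldots,I$. Then each $\boldsymbol{w}^H\bar{\boldsymbol{B}}_i = \sigma_{B_i}(\boldsymbol{w}^H\boldsymbol{a}_s(\varphi_i))\boldsymbol{a}_t^H(\theta_i)\boldsymbol{V} = \boldsymbol{0}$, so the clutter term $\mathring{\boldsymbol{B}}$ vanishes from the SCNR and the clutter transmit directions $\boldsymbol{a}_t(\theta_i)$ no longer appear in the objective. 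Re-running the KKT argument of Proposition \ref{proposition1} with this simplified objective, the stationarity conditions only involve the reduced dictionary $\tilde{\boldsymbol{V}} = [\boldsymbol{a}_t(\varphi_{1})\ldots\boldsymbol{a}_t(\varphi_{L_K})\phantom{0}\boldsymbol{a}_t(\theta_0)]$, which has $\tilde{r} = r - I$ columns. The optimal precoder then lies in $\mathcal{R}(\tilde{\boldsymbol{V}})$, and repeating the constant-modulus construction above with $\tilde{\boldsymbol{V}}$ in place of $\boldsymbol{V}$ shows that $N_t^{\rm RF} \geq \tilde{r} = r - I$ RF chains suffice.

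The main obstacle is the clutter-nulling part: it is not immediate that nulling the clutter at the receiver removes the clutter transmit steering vectors from the \emph{precoder's} subspace. This requires tracing through the KKT stationarity conditions underlying Proposition \ref{proposition1} and verifying that, once $\boldsymbol{w}^H\bar{\boldsymbol{B}}_i = \boldsymbol{0}$, the gradient of the objective with respect to $\boldsymbol{X}_k$ retains no component along $\boldsymbol{a}_t(\theta_i)$, so these directions may be safely discarded. The first part, by contrast, rests only on the elementary but essential fact that ULA steering vectors are themselves constant-modulus and hence exactly realizable by analog phase shifters.
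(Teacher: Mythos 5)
Your proposal is correct and follows essentially the same route as the paper's own proof: invoke Proposition \ref{proposition1} to place $\boldsymbol{F}_k^{\star}$ in $\mathcal{R}(\boldsymbol{V})$, use the constant-modulus structure of ULA steering vectors to embed $\boldsymbol{V}$ directly as columns of $\boldsymbol{F}_{\rm RF}$ (padding with arbitrary unit-modulus columns), and, for the relaxed bound, observe that $\boldsymbol{w}^H\boldsymbol{B}_i = \boldsymbol{0}$ kills the clutter term in the SCNR gradient \eqref{grad-Fk} so the KKT argument goes through with the reduced $(r-I)$-column dictionary. Your write-up is in fact slightly more explicit than the paper's in two spots — the $\sqrt{N_t}$ rescaling needed because the steering vectors are normalized, and the closed-form $\boldsymbol{F}_{{\rm BB},k} = \boldsymbol{F}_{\rm RF}^{\dagger}\boldsymbol{F}_k^{\star}$ — but these are refinements of, not departures from, the paper's argument.
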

\begin{proof}
As shown in Proposition \ref{proposition1}, the optimized digital
precoder can be expressed as $\boldsymbol{F}_k^{\star} =
\boldsymbol{V} \boldsymbol{X}_k^{\star}, \forall k$ with
$\boldsymbol{V} \in \mathbb C^{N_t \times r}$ and
$\boldsymbol{X}_k^{\star} \in \mathbb C^{r \times N_s}, \forall
k$. Note that $\boldsymbol{V}$ is a matrix comprising steering
vectors from the BS to all $K$ UEs, the target and all clutters.
As a result, all elements in $\boldsymbol{V}$ satisfy the unit
modulus constraint. If $N_t^{\rm RF} \geq r$, then we simply set
the analog precoder $\boldsymbol{F}_{\rm RF}$ as
\begin{align}
\boldsymbol{F}_{\rm RF} = [\boldsymbol{V}^T \phantom{0}
\boldsymbol{U}^T]^T \in \mathbb C^{N_t \times N_t^{\rm RF}}, \label{eq-hyU}
\end{align}
where $\boldsymbol{U}\in \mathbb C^{N_t\times (N_t^{\rm RF}-r)}$
is an arbitrary matrix satisfying the unit modulus constraint.
With $\boldsymbol{F}_{\rm RF}$ defined above, the optimal precoder
$\boldsymbol{F}_k^{\star}$ can always be attained by the hybrid
precoder via searching for an appropriate $\boldsymbol{F}_{{\rm BB},k}$
for each user.

On the other hand, if the radar receiver beamforming
$\boldsymbol{w}$ can effectively null those clutter patches, i.e.,
$\|\boldsymbol{w}^H \sum_{i=1}^I \boldsymbol{B}_i \|_2^2 = 0$, the
SCNR term in \eqref{grad-Fk} simplifies to $
\nabla_{\boldsymbol{F}_k} \text{SCNR} =\frac{1 }{f_2}
\boldsymbol{A}^H\boldsymbol{w}\boldsymbol{w}^H \boldsymbol{A}
\boldsymbol{F}_k$. Consequently, the paths associated with these
nulled clutter patches become insignificant for SCNR optimization
and can be eliminated from $\boldsymbol{V}$. This leads to the
optimized digital precoder of the form $\boldsymbol{F}_k =
\boldsymbol{V}_r \boldsymbol{X}_r$ with $\boldsymbol{V}_r \in
\mathbb C^{N_t \times (r-I)}$ and $\boldsymbol{X}_r \in \mathbb
C^{ (r-I) \times N_s}$. We can then define the hybrid precoder as
before using $\boldsymbol{V}_r$ instead of $\boldsymbol{V}$. This
completes the proof.
\end{proof}

Since mmWave/THz channels typically exhibits a sparse scattering
characteristic, we usually have $r \ll N_t$. Hence a small number
of RF chains are adequate for fully exploiting the multiplexing
gains of mmWave/THz MU-MIMO ISAC systems.

\section{A Simple Sub-Optimal Solution for Transceiver Design}\label{sec-linear}
The algorithm proposed in Section \ref{sec-prop-BCd} involves an
iterative procedure for updates of block variables, thereby posing
challenges in deriving clear insights. In this section, we propose
a simple sub-optimal solution to the optimization problem
\eqref{eq-opt-simp0}. The proposed solution is inspired by the
traditional BD scheme. Note that the
precoder $\boldsymbol{F}_k, \forall k$ should be optimized to
maximize the communication as well as sensing performance.
Naturally we can express the precoder $\boldsymbol{F}_k$ as a sum
of two terms, one for communication and the other for
sensing, i.e.
\begin{align}
\boldsymbol{F}_k = {\boldsymbol{F}}_{c,k} + {\boldsymbol{F}}_{s,k}
,
\end{align}
where ${\boldsymbol{F}}_{c,k}$ denotes the precoder matrix used
for the $k$th UE, and ${\boldsymbol{F}}_{s,k}$ is the precoder
used for sensing. Let $p_{c,k} = \| {\boldsymbol{F}}_{c,k}
\|_2^2,\forall k$ denote the transmit power for communication and
$p_{s,k} = \| {\boldsymbol{F}}_{s,k}\|_2^2, \forall k$ the
transmit power for sensing.


\subsection{Communication-Oriented Precoder Design}
\label{sec-Comm-prec} Firstly, we introduce the basic idea of
designing the precoder ${\boldsymbol{F}}_{c,k}$. The design of
${\boldsymbol{F}}_{c,k}$ is primarily based on the following two
criteria:
\begin{enumerate}
\item The precoder ${\boldsymbol{F}}_{c,k}$ is designed to cause no interference to other
UEs. Also, it should not illuminate any clutter patches to
deteriorate the sensing performance.
\item The precoder is designed to maximize the achievable rate for the $k$th UE.
\end{enumerate}

The above two criteria are introduced to ensure that the
communication-oriented precoder ${\boldsymbol{F}}_{c,k}$ enhances
the WSR while without compromising the sensing performance.
Specifically, the first criterion leads to
\begin{align}
\boldsymbol{H}_j \boldsymbol{F}_{c,k} = \boldsymbol{0}, \forall j
\neq k, \quad \boldsymbol{B}_i\boldsymbol{F}_{c,k} =
\boldsymbol{0}, \forall i=1,2,\ldots, I.\label{eq-constr-interf}
\end{align}
To satisfy the above constraints, the precoder
$\boldsymbol{F}_{c,k}$ should lie in the null space of the channel
from the BS to all other UEs and all clutter patches, i.e.,
\begin{align}
\hat{\boldsymbol{H}}_k = & [\boldsymbol{H}_1^T\phantom{0}
\ldots\phantom{0}\boldsymbol{H}_{k-1}^T\phantom{0}
\boldsymbol{H}_{k+1}^T\phantom{0}\ldots\phantom{0}\boldsymbol{H}_K^T\phantom{0}
\boldsymbol{B}_{1}^T\phantom{0}
\ldots\phantom{0}  \boldsymbol{B}_{I}^T]^T, \nonumber \\
\stackrel{(a)}=& \hat{\boldsymbol{U}} \hat{\boldsymbol{\Sigma}} [
\hat{\boldsymbol{V}}_k^{(1)}\phantom{0} \hat{\boldsymbol{V}}_k^{(0)}]^H,
\end{align}
where $(a)$ denotes the SVD of $\hat{\boldsymbol{H}}_k $, and
$\hat{\boldsymbol{V}}_k^{(0)}$ is a submatrix consisting of $N_t -
{\rm rank}(\hat{\boldsymbol{H}}_k )$ right singular vectors which
form an orthogonal basis for the null space of
$\hat{\boldsymbol{H}}_k$. Therefore, the precoder
$\boldsymbol{F}_{c,k}$ can be represented as $\boldsymbol{F}_{c,k}
= \hat{\boldsymbol{V}}_k^{(0)} \hat{\boldsymbol{X}}_{c,k} $, where
$\hat{\boldsymbol{X}}_{c,k}$ is to be determined.

It is noted that the interference-nulling technique enables us to
express the WSR term in \eqref{eq-R} as a sum of $K$
independent rate functions. As a result, the problem of designing
${\boldsymbol{F}}_{c,k}$ can be simplified as
\begin{align}
\max_{\boldsymbol{X}_{c,k}} \quad &  \log_2 {\rm det}(\boldsymbol{I} +
\sigma_k^{-2} (\boldsymbol{W}_k^H\boldsymbol{W}_k)^{-1}\boldsymbol{W}_k^H
\boldsymbol{H}_k^{\rm BD} \hat{ \boldsymbol{X}}_{c,k} \nonumber \\
&\times \hat{ \boldsymbol{X}}_{c,k}^H (\boldsymbol{H}_k^{\rm BD})^H \boldsymbol{W}_k )  \nonumber \\
{\text{s.t.}} \quad & {\rm tr}(\boldsymbol{X}_{c,k}
\boldsymbol{X}_{c,k}^H) \leq p_{c,k},
\label{eq-opt-Xck}
\end{align}
where $\boldsymbol{H}_k^{\rm BD}\triangleq
\boldsymbol{H}_k\hat{\boldsymbol{V}}_k^{(0)}$ denotes the
effective channel between the BS and the $k$th UE, and its ordered
SVD is given as
\begin{align}
\boldsymbol{H}_k^{\rm BD} =\boldsymbol{U}_{{\rm BD},k}
\boldsymbol{\Sigma}_{{\rm BD},k} \boldsymbol{V}_{{\rm BD},k}^H,
\label{eq-Hkbd}
\end{align}
where $\boldsymbol{\Sigma}_{{\rm BD},k}={\rm diag}(\varrho_{k,i},
\ldots, \varrho_{k,N_s})$. According to the traditional MIMO
theory\cite{tse2005fundamentals}, the optimal solution to \eqref{eq-opt-Xck} can be
obtained as
\begin{align}
\boldsymbol{X}_{c,k} =& \boldsymbol{V}_{{\rm BD},k}(:,1:N_s) \boldsymbol{\Gamma}_k^{1/2},  \\
    \boldsymbol{W}_k =& \boldsymbol{U}_{{\rm BD},k}(:,1:N_s), \label{eq-FBD}
\end{align}
where $\boldsymbol{V}_{\rm BD}(:,1:N_s)$ denotes the matrix
consisting of the first $N_s$ columns of $\boldsymbol{V}_{\rm
BD}$, $\boldsymbol{\Gamma}_k = {\rm diag}(p_{k,1}, \ldots,
p_{k,N_s})$ is a diagonal matrix with the $i$th diagonal element
$p_{k,i}$ being the power allocated to the corresponding data
stream. Also, we have $\sum_{i=1}^{N_s} p_{k,i} = p_{c,k}$. The
corresponding precoder $\boldsymbol{F}_{c,k}$ is thus given by
\begin{align}
\boldsymbol{F}_{c,k} =&\hat{\boldsymbol{V}}_k^{(0)}
\boldsymbol{V}_{{\rm BD},k}(:,1:N_s) \boldsymbol{\Gamma}_k^{1/2},
\label{eq-Fck}
\end{align}
We will discuss the power allocation $p_{k,i}, \forall k, \forall
i$ in the next subsection.


\subsection{Sensing-Oriented Precoder Design}\label{sec-sens-prec}
We express the sensing-oriented precoder as
\begin{align}
\boldsymbol{F}_{s,k} = \sqrt{p_{s,k}}\boldsymbol{F}_{\rm sen} ,
\quad \|\boldsymbol{F}_{\rm sen}\|_F^2 = 1, \label{eq-Fksen}
\end{align}
where $p_{s,k}$ denotes the power allocated to the sensing
precoder $\boldsymbol{F}_{\rm sen}$. The design of
$\boldsymbol{F}_{\rm sen}$ is based on the following two criteria
\begin{enumerate}
\item The precoder does not cause any interference to UEs. Meanwhile, it
should not illuminate any clutter patches to deteriorate the
sensing performance.
\item The precoder should maximize the received signal power at the target.
\end{enumerate}

Since the design criteria prevents the transmitter from illujminating the clutters, the radar receiver $\boldsymbol{w}$ reduces to a simple beamformer that points to the target, i.e., $\boldsymbol{w}=\boldsymbol{a}_s(\varphi_0)$. Consequently, the
target's received signal power due to the sensing precoder
$\boldsymbol{F}_{\rm sen}$ is given by ${\rm tr} (\boldsymbol{w}^H
\boldsymbol{A} \boldsymbol{F}_{\rm sen} \boldsymbol{F}_{\rm sen}^H
\boldsymbol{A} \boldsymbol{w}) = \sigma_A^2\|
\boldsymbol{a}_t^H(\theta_0) \boldsymbol{F}_{\rm sen} \|_2^2$.
Hence the above two criteria can be formulated into the following
optimization
\begin{align}
\max_{\boldsymbol{F}_{\rm sen} } \quad & \| \boldsymbol{a}_t^H(\theta_0)
\boldsymbol{F}_{\rm sen} \|_2^2 \nonumber \\
\text{s.t.} \quad & \bar{\boldsymbol{H}} \boldsymbol{F}_{\rm sen} = \boldsymbol{0}, \nonumber \\
& \|\boldsymbol{F}_{\rm sen}\|_F^2 = 1,\label{eq-opt-fsen}
\end{align}
where $\bar{\boldsymbol{H}} = [\boldsymbol{H}_1^T \phantom{0}
\ldots \phantom{0}  \boldsymbol{H}_K^T \phantom{0} \boldsymbol{B}_{1}^T \ldots
\boldsymbol{B}_{I}^T]^T $ denotes the matrix containing all UEs'
channels and clutter response matrices. The optimal solution to
problem \eqref{eq-opt-fsen} can be easily obtained by choosing the
columns in $\boldsymbol{F}_{\rm sen}$ as the orthogonal projection
of $\boldsymbol{a}_t(\theta_0)$ onto the null space of
$\bar{\boldsymbol{H}}$, i.e.,
\begin{align}
\boldsymbol{F}_{\rm sen} =& \boldsymbol{f}_{\rm sen} \times \boldsymbol{1}_{N_s}^T, \nonumber \\
\boldsymbol{f}_{\rm sen} =& c_{\rm sen}(\boldsymbol{a}_t(\theta_0) -
\bar{\boldsymbol{V}}\bar{\boldsymbol{V}}^H \boldsymbol{a}_t(\theta_0) ) ,\label{eq-fsen}
\end{align}
where $\bar{\boldsymbol{V}}$ is the right singular matrix of the
truncated SVD of $\bar{\boldsymbol{H}} =
\bar{\boldsymbol{U}}\bar{\boldsymbol{\Sigma}}
\bar{\boldsymbol{V}}^H$, and $c_{\rm sen}$ is a scalar to ensure
that $\|\boldsymbol{F}_{\rm sen} \|_F=1$.

\emph{Remarks}: When $\boldsymbol{a}_t(\theta_0)$ lies in the
range space of $\bar{\boldsymbol{H}}^H$,
$\boldsymbol{a}_t(\theta_0)$ is orthogonal to the null space of
$\bar{\boldsymbol{H}}$, resulting in $\boldsymbol{f}_{\rm sen} =
\boldsymbol{0}$. This is consistent with the intuition that when
the target subspace overlaps with the communication subspace, the
communication power itself suffices for target illumination.
Consequently, no additional power needs to be allocated for
sensing.



\subsection{Combining Communication- and Sensing-Oriented Precoder}
\label{sec-synth} To obtain the final precoder, it remains to
determine the power coefficients that are allocated to
$\boldsymbol{F}_{c,k}$ and $\boldsymbol{F}_{s,k}$.

Define $\mathcal{P} \triangleq \{ \{
p_{k,i}\}_{k=1,i=1}^{K,N_s},\; \{ p_{s,k} \}_{k=1}^K \}$ as the
set containing all power coefficients. Substituting the combiner
$\boldsymbol{W}_k$ in \eqref{eq-FBD} and the precoder in
\eqref{eq-Fksen} into the WSR term in \eqref{eq-R}, we arrive at
\begin{align}
R(\mathcal{P}) =& \sum_{k=1}^K  w_k \log_2 {\rm det}(\boldsymbol{I} +
\sigma_k^{-2} (\boldsymbol{W}_k^H\boldsymbol{W}_k)^{-1}\boldsymbol{W}_k^H
\boldsymbol{H}_k{ \boldsymbol{F}}_k \nonumber \\
&\times { \boldsymbol{F}}_k^H (\boldsymbol{H}_k)^H \boldsymbol{W}_k ) \nonumber \\
=& \sum_{k=1}^K w_k \sum_{i=1}^{N_s} \log_2 (1 +
\sigma_k^{-2}\varrho_{k,i}^2 p_{k,i}),\label{eq-wsr-linear}
\end{align}
where $\varrho_{k,i}$ is defined in \eqref{eq-Hkbd}. Fixing
$\boldsymbol{w} = \boldsymbol{a}_{s}(\phi_0)$, the SCNR term in
\eqref{eq-SCNR} can be expressed as
\begin{align}
\text{SCNR} =& \frac{\boldsymbol{w}^H  \boldsymbol {A}
\boldsymbol{F} \boldsymbol {F}^H  \boldsymbol{A}^H \boldsymbol{w}}{ \boldsymbol{w}^H
\left(\sum_{i=1}^I \boldsymbol{B}_i \boldsymbol {F} \boldsymbol {F}^H
\boldsymbol {B}_i^H + \sigma^2 \boldsymbol{I}   \right) \boldsymbol{w}  }  \nonumber \\
= & \frac{1}{ \sigma^2} \sum_{k=1}^K {\rm tr} ( \boldsymbol{F}_k^H
\boldsymbol{A}^H \boldsymbol{w} \boldsymbol{w}^H \boldsymbol{A} \boldsymbol{F}_k) \nonumber \\
=&  \sum_{k=1}^K \sum_{i=1}^{N_s} d_{k,i} p_{k,i} + g\sum_{k=1}^K p_{s,k} \nonumber \\
&+ \sum_{k=1}^K \sum_{i=1}^{N_s} e_{k,i} \sqrt{p_{k,i} p_{s,k}}
\nonumber \\
\stackrel{(a)}\approx &  \sum_{k=1}^K
\sum_{i=1}^{N_s} d_{k,i} p_{k,i} + g\sum_{k=1}^K p_{s,k},
\label{eq-SCNR_m}
\end{align}
where
\begin{align}
d_{k,i} \triangleq& \sigma^{-2} \sigma_A^2
[\boldsymbol{F}_{c,k}^H\boldsymbol{a}_t(\theta_0)
\boldsymbol{a}_t^H(\theta_0) \boldsymbol{F}_{c,k}]_{i,i}  , \label{eq-dki}\\
e_{k,i} \triangleq& 2\sigma^{-2} \sigma_A^2\Re \{
[\boldsymbol{F}_{\rm sen}^H
\boldsymbol{a}_t(\theta_0) \boldsymbol{a}_t^H(\theta_0)\boldsymbol{F}_{c,k}]_{i,i} \}, \\
g \triangleq& \sigma^{-2} \sigma_A^2{\rm
tr}\left(\boldsymbol{F}_{\rm sen}^H\boldsymbol{a}_t(\theta_0)
\boldsymbol{a}_t^H(\theta_0) \boldsymbol{F}_{{\rm sen}} \right),
\label{eq-g}
\end{align}
with $[\boldsymbol{R}]_{i,j}$ denoting the $(i,j)$th element of
the matrix $\boldsymbol{R}$. By examining \eqref{eq-dki} to
\eqref{eq-g}, we see that $d_{k,i}$ characterizes the SCNR gain
resulting from the energy leakage from communication-oriented
precoder, $g$ only depends on $\boldsymbol{F}_{\rm sen}$, and
$e_{k,i}$ is a cross term that describes the correlation between
the communication-oriented precoder $\boldsymbol{F}_{c,k}$ and the
sensing-oriented precoder $\boldsymbol{F}_{\rm sen}$. It should be noted that $e_{k,i}$ is usually much smaller than $d_{k,i}$ or $g$ because of the following two reasons:
\begin{itemize}
	\item when $\boldsymbol{a}_t(\theta_0)$ lies in the communication subspace, we have $\boldsymbol{F}_{\rm sen} = \boldsymbol{0}$ and thus $e_{k,i} = 0$;
	\item when $\boldsymbol{a}_t(\theta_0)$ does not lie in the communication subspace, the sensing-oriented precoder $\boldsymbol{F}_{\rm sen}$ lies in the null space of $\boldsymbol{H}_k$ while the communication-oriented precoder $\boldsymbol{F}_{c,k}$ should be as far away as possible from the null space of $\boldsymbol{H}_k$ to ensure communication performance. Specifically, when $\boldsymbol{f}_{\rm sen}^H \boldsymbol{F}_{c,k} = \boldsymbol{0}$, we have $e_{k,i} = 0$.
\end{itemize}


Since the cross term $e_{k,i}\sqrt{p_{k,i}p_{s,k}}$ makes it
difficult to optimize, we turn to maximize the sum of the WSR and
the approximate expression of SCNR derived in \eqref{eq-SCNR_m}. The problem
can be formulated as a simple power allocation problem
\begin{align}
\max_{\mathcal{P}} \quad & \rho_c \sum_{k=1}^{K}w_k \sum_{i=1}^{N_s} \log_2(1 +
\sigma_k^{-2}\varrho_{k,i}^2 p_{k,i}) \nonumber \\
&+ \rho_s  \sum_{k=1}^K\sum_{i=1}^{N_s} d_{k,i} p_{k,i}   + \rho_s g\sum_{k=1}^K  p_{s,k} \nonumber \\
{\text{s.t.}} \quad & \sum_{k=1}^{K} \sum_{i=1}^{N_s} p_{k,i} +
\sum_{k=1}^{K} p_{s,k} \leq P_t, \label{eq-pk}
\end{align}
which is convex and admits a closed-form solution akin to
traditional water-filling scheme.

\begin{Proposition}
The optimal solution of \eqref{eq-pk} is given as
\label{proposition-power}
\begin{align}
p_{k,i} =&\max \{ 0, \frac{\rho_c w_k}{(\ln 2)(\mu - \rho_s d_{k,i})} -
\frac{\sigma_k^2}{\varrho_{k,i}^2}\}, \mu \geq \rho_s g,\\
p_{s,k} =& \begin{cases} 0,& \text{if } \mu > \rho_s g, \\
\frac{P_t}{K} - \frac{1}{K} (\sum_{k=1}^K \sum_{i=1}^I p_{k,i}), &
\text{if } \mu = \rho_s g,
 \end{cases}
    \end{align}
where $\mu$ is the auxiliary variable satisfying the total
transmit power constraint in \eqref{eq-pk}.
\end{Proposition}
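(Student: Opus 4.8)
The plan is to treat \eqref{eq-pk} as a convex program and characterize its optimum through the Karush--Kuhn--Tucker (KKT) conditions. First I would note that the objective is concave in the power variables: it is a nonnegative weighted combination of the concave terms $\log_2(1+\sigma_k^{-2}\varrho_{k,i}^2 p_{k,i})$ and the linear terms $\rho_s d_{k,i}p_{k,i}$ and $\rho_s g\,p_{s,k}$, while the feasible set defined by the sum-power budget and the implicit nonnegativity constraints $p_{k,i}\ge 0$, $p_{s,k}\ge 0$ is a polytope. Consequently the KKT conditions are both necessary and sufficient for global optimality, so it suffices to exhibit multipliers under which the stated expressions satisfy stationarity, primal and dual feasibility, and complementary slackness.

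Next I would introduce a single multiplier $\mu\ge 0$ for the total power constraint and multipliers $\lambda_{k,i}\ge 0$, $\nu_k\ge 0$ for the nonnegativity constraints, and form the associated Lagrangian. Setting the derivative with respect to an active communication power $p_{k,i}>0$ to zero (so that $\lambda_{k,i}=0$) yields the relation $\rho_c w_k\sigma_k^{-2}\varrho_{k,i}^2/[(\ln 2)(1+\sigma_k^{-2}\varrho_{k,i}^2 p_{k,i})]+\rho_s d_{k,i}=\mu$, which I would solve for $p_{k,i}$. Combining this with complementary slackness to handle inactive streams produces exactly the truncated water-filling form $p_{k,i}=\max\{0,\rho_c w_k/[(\ln 2)(\mu-\rho_s d_{k,i})]-\sigma_k^2/\varrho_{k,i}^2\}$, and makes transparent that the sensing leakage $\rho_s d_{k,i}$ effectively lowers the water level for stream $(k,i)$.

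The more delicate part concerns the sensing powers $p_{s,k}$, which enter the objective only linearly via $\rho_s g\,p_{s,k}$. Differentiating the Lagrangian in $p_{s,k}$ gives the stationarity condition $\rho_s g-\mu+\nu_k=0$, so dual feasibility $\nu_k\ge 0$ immediately forces $\mu\ge\rho_s g$, which is precisely the restriction appearing in the statement. I would then split into two cases. If $\mu>\rho_s g$ then $\nu_k=\mu-\rho_s g>0$, and complementary slackness $\nu_k p_{s,k}=0$ forces $p_{s,k}=0$, so no power is devoted to sensing. If $\mu=\rho_s g$ then $\nu_k=0$ and the objective is flat in $p_{s,k}$, so any nonnegative allocation consistent with the budget is optimal. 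Throughout, the scalar $\mu$ is pinned down by the power constraint: since the objective is strictly increasing in the available budget (the per-stream marginal value diverges as $p_{k,i}\to 0^+$), the constraint is active at the optimum and $\mu$ is tuned so that $\sum_{k,i}p_{k,i}+\sum_k p_{s,k}=P_t$.

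The main obstacle I anticipate is the degeneracy of the $p_{s,k}$ variables in the boundary case $\mu=\rho_s g$: because the objective is insensitive to how the residual power $P_t-\sum_{k,i}p_{k,i}$ is distributed among the $K$ sensing precoders, the optimizer is not unique, and the even split $p_{s,k}=\tfrac{1}{K}(P_t-\sum_{k,i}p_{k,i})$ quoted in the statement is merely one convenient representative of an entire optimal face rather than the unique solution. I would make this explicit, emphasizing that any allocation summing to the residual power attains the same objective value, so the stated formula is chosen without loss of generality.
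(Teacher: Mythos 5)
Your proposal is correct and takes essentially the same route as the paper: the paper's proof consists solely of the remark that the solution follows from the KKT conditions of \eqref{eq-pk} (omitted for brevity), and your derivation---the truncated water-filling from stationarity in $p_{k,i}$, the dual-feasibility argument forcing $\mu \geq \rho_s g$, the case split between $\mu > \rho_s g$ and $\mu = \rho_s g$, and the observation that the equal split of residual sensing power is merely one representative of a non-unique optimal face---supplies exactly those omitted details. One minor slip: the marginal value $\frac{d}{dp}\log_2\left(1+\sigma_k^{-2}\varrho_{k,i}^2 p\right)\big|_{p=0^+} = \sigma_k^{-2}\varrho_{k,i}^2/\ln 2$ is finite, not divergent; the activity of the power budget instead follows simply from the objective being strictly increasing in each $p_{k,i}$ whenever $\rho_c w_k > 0$, so your conclusion stands.
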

\begin{proof}
The solution can be obtained directly by solving the KKT condition
of the problem \eqref{eq-pk} and is thus omitted for brevity.
\end{proof}

Here we gain insights into the proposed solution. Firstly, it is
observed that, when either $\rho_s = 0$ or $g = 0$, the power
allocated to sensing becomes zero and the power allocated to
different data streams for different UEs admits the optimal
water-filling solution. The result for $\rho_s = 0$ is intuitive
since all the power should be allocated to improve the
communication performance in such a case. On the other hand, in
cases where $\rho_s>0$, the power allocated to sensing may still
be zero. This occurs if $\boldsymbol{a}_t(\theta_0)$ lies in the
range space of $\bar{\boldsymbol{H}}^H$, leading to $g=0$, as
discussed earlier in Section \ref{sec-sens-prec}.

Next, we delve into the scenario where $p_{s,k}>0$, i.e., $\mu =
\rho_s g$. It is interesting to observe that the power allocated
to different streams of communication users still admits a
solution akin to water-filling scheme. However, instead using a
fixed water level as in traditional MIMO capacity optimization
problems, the water level in \eqref{eq-pk}, i.e., $\frac{\rho_c
w_k}{ (\ln 2)(\rho_s g - \rho_s d_{k,i})}$, is dependent on
$d_{k,i}$ which can be considered as a metric characterizing the
amount of power leakage from communication to sensing. When
$d_{k,i}$ gets larger, the water level rises, resulting in an
increased power allocated to the $k$th communication UE. This
result is quite natural as increasing the power to this UE can
also help increase the sensing performance. In contrast, when
$d_{k,i}, \forall k,i$ is relatively small, i.e., UEs and target
are separated in spatial domain, and the communication users have
a high receive SNR, Proposition \ref{proposition-power} suggests
that the equal power allocation among different UEs is an
approximately optimal solution.

\emph{Remarks}: The proposed method for the problem \eqref{eq-pk}
can be easily adapted to the case where the objective is to
maximize the sensing (or communication) performance subject to a
communication (or sensing) constraint. In particular, we can use a
line search-based method to find a suitable value of $(\rho_c,
\rho_s)$ to satisfy the communication (or sensing) constraint and
meanwhile maximize the sensing (or communication) performance.
After obtaining the optimal digital precoder/combiner, we can
resort to the manifold-optimization method in Section
\ref{sec-hybrid} to obtain a hybrid precoder/combiner.

\begin{figure}[t]
\centering
\includegraphics[width=3.2in] {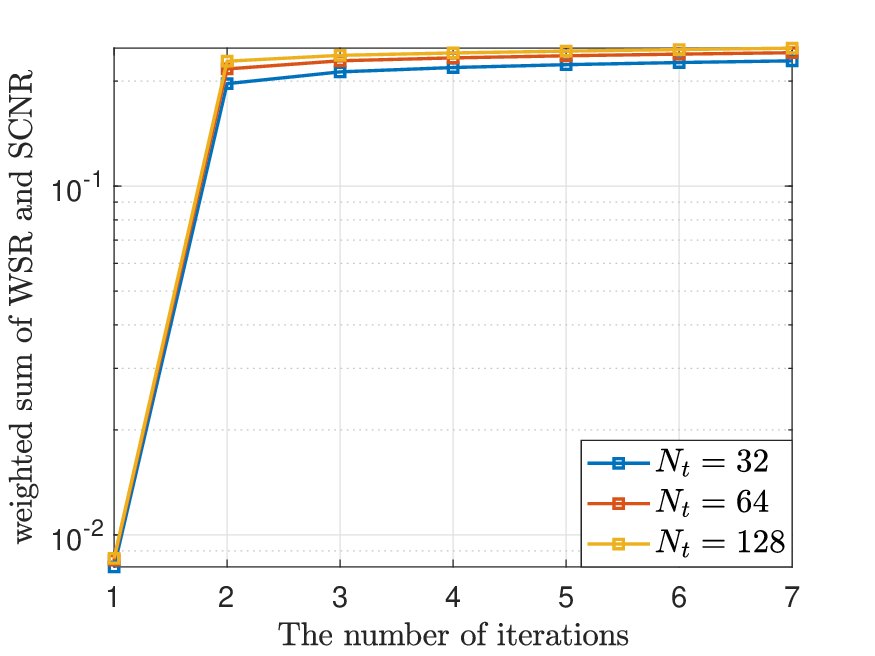}
\caption{Convergence behaviour of the proposed BCD-based method.}
\label{fig_converge}
\end{figure}

\begin{figure*}[!t]
 \centering
 \subfigure[Weighted sum of WSR and SCNR v.s. the transmit power $P_t$.]{
  \begin{minipage}{.31\textwidth}
   \centering
   \includegraphics[scale=.26]{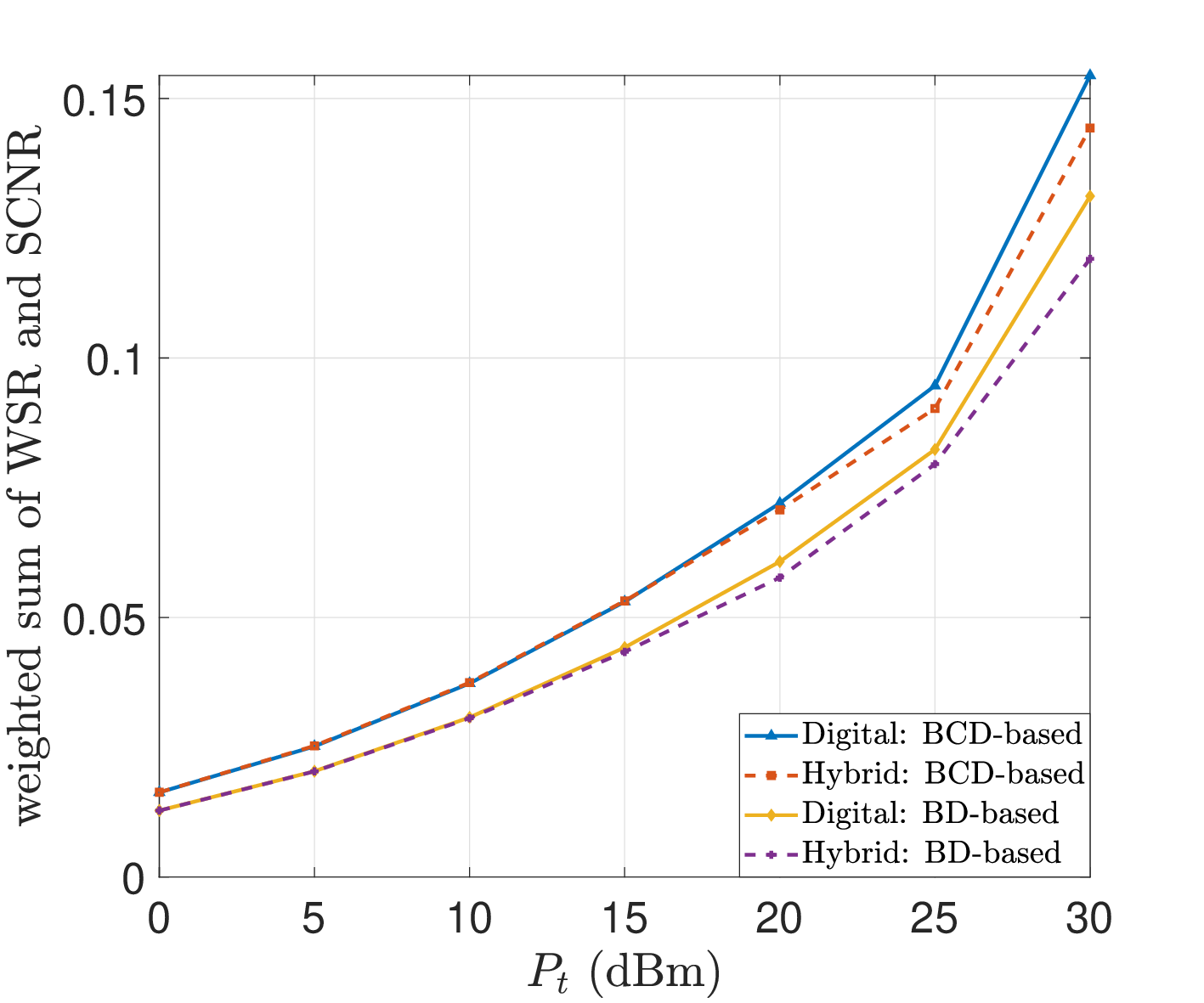}
  \end{minipage}
  \label{FIG6_1}
 }
 \subfigure[WSR v.s. the transmit power $P_t$.]{
  \begin{minipage}{.31\textwidth}
   \centering
   \includegraphics[scale=.26]{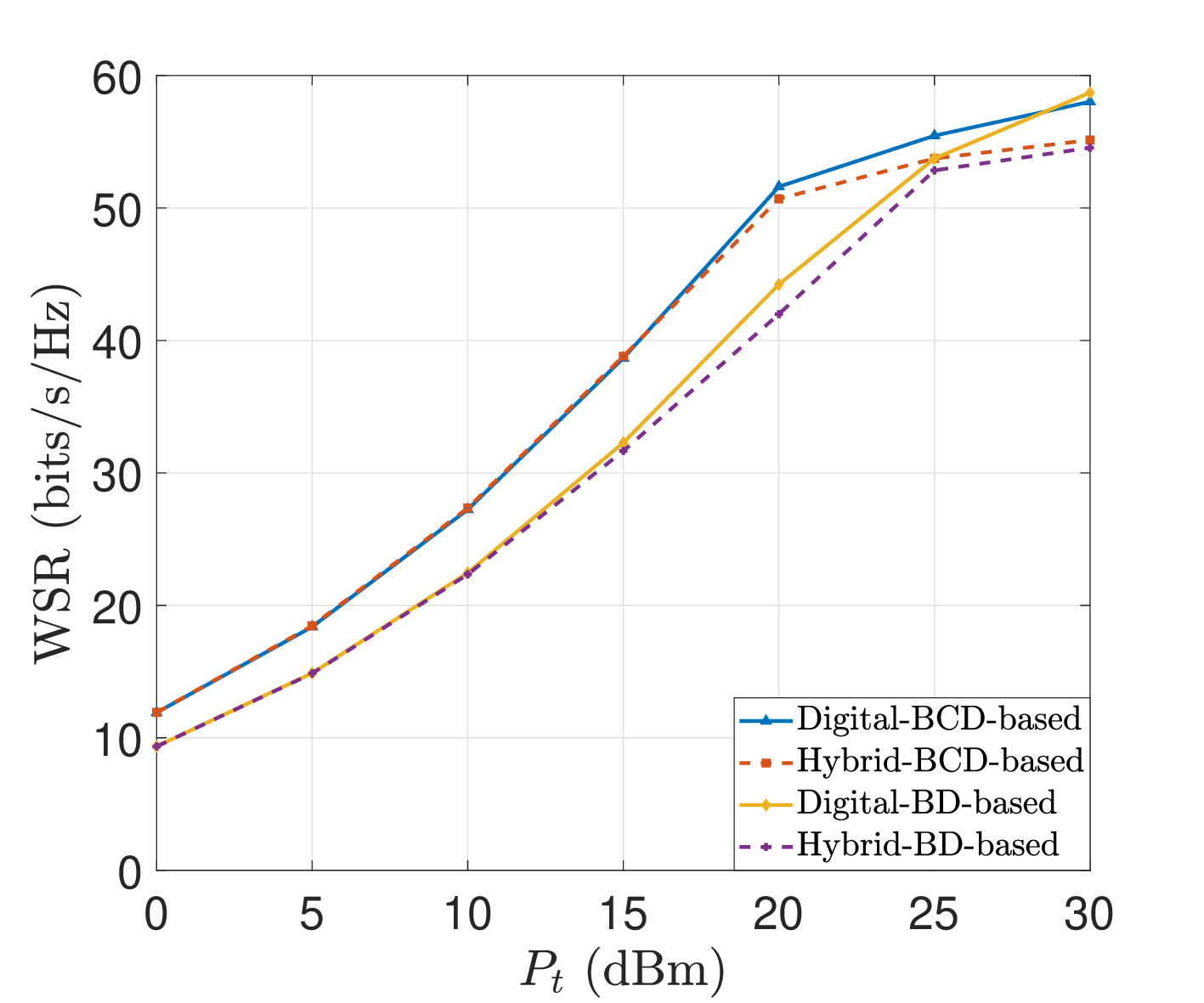}
  \end{minipage}
  \label{FIG6_2}
 }
 \subfigure[SCNR v.s. the transmit power $P_t$.]{
  \begin{minipage}{.31\textwidth}
   \centering
   \includegraphics[scale=.26]{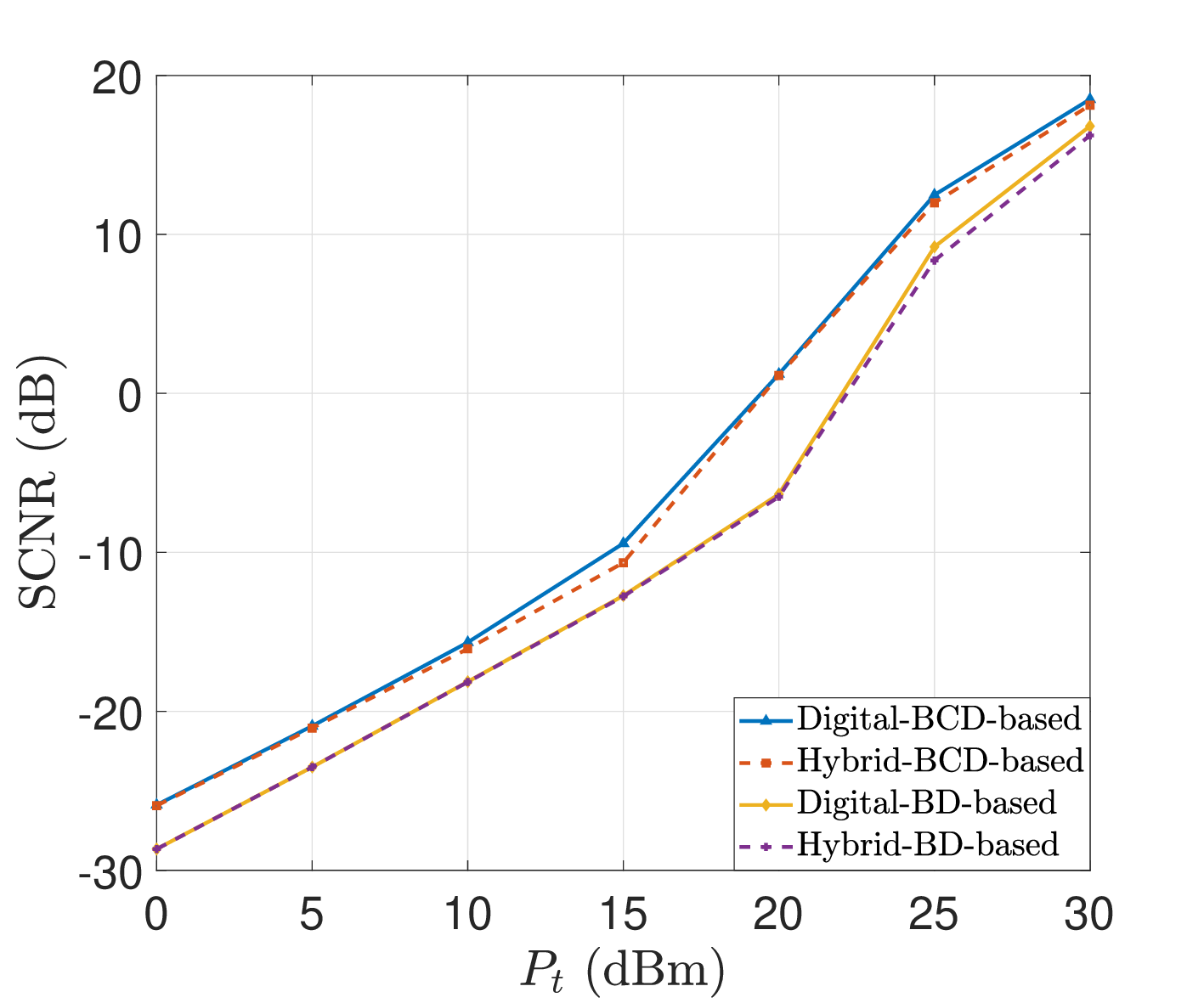}
  \end{minipage}
  \label{FIG6_3}
 }
  \caption{Weighted sum of WSR and SCNR, WSR, and SCNR v.s. the transmit power $P_t$.}
 \label{fig_Pt}
\end{figure*}

\begin{figure}[t]
\centering
\includegraphics[width=3.2in] {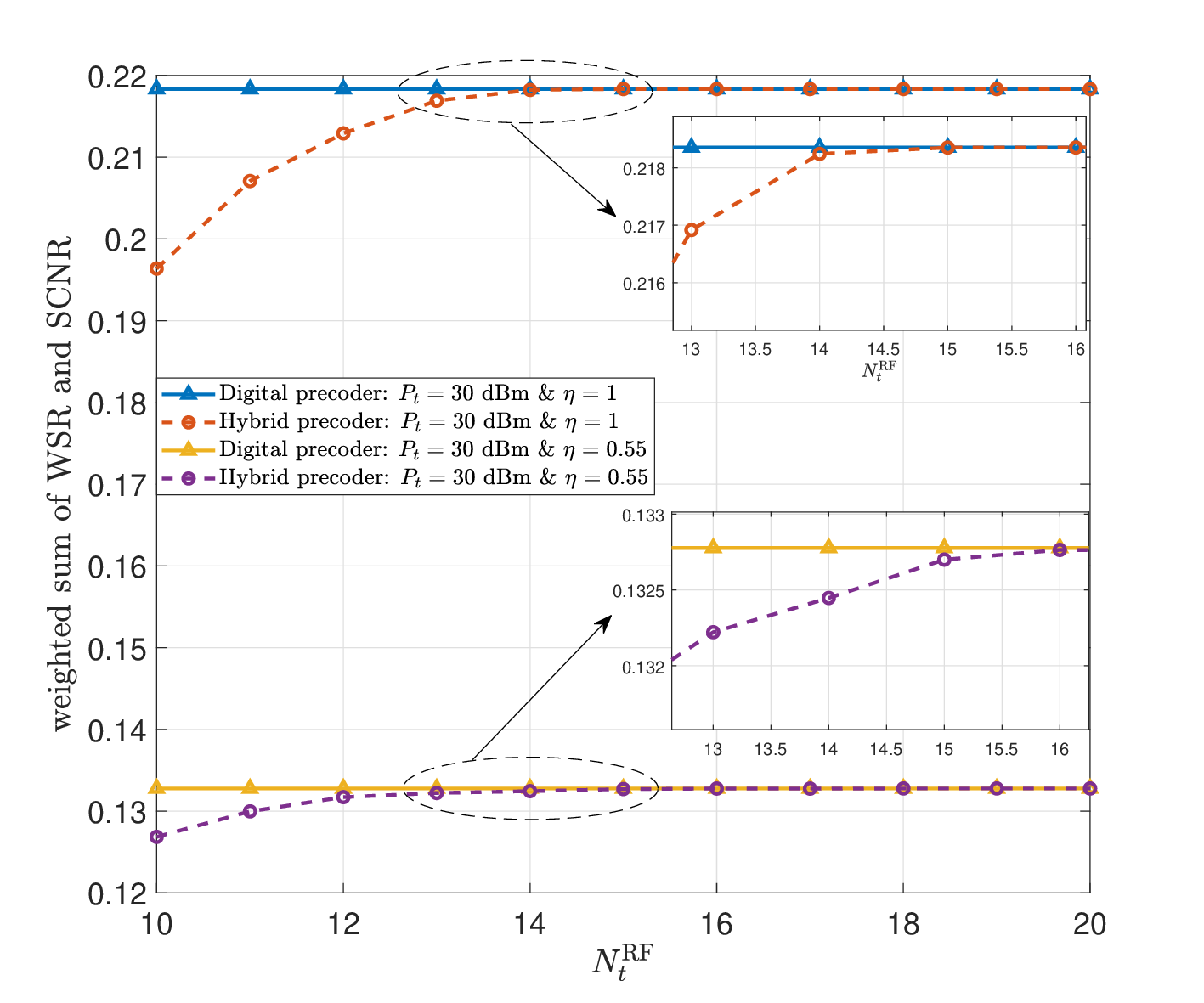}
\caption{Weighted sum of WSR and SCNR v.s. $N_t^{\rm RF}$.}
\label{fig_NtRF}
\end{figure}

\begin{figure}[t]
\centering
\includegraphics[width=3.2in] {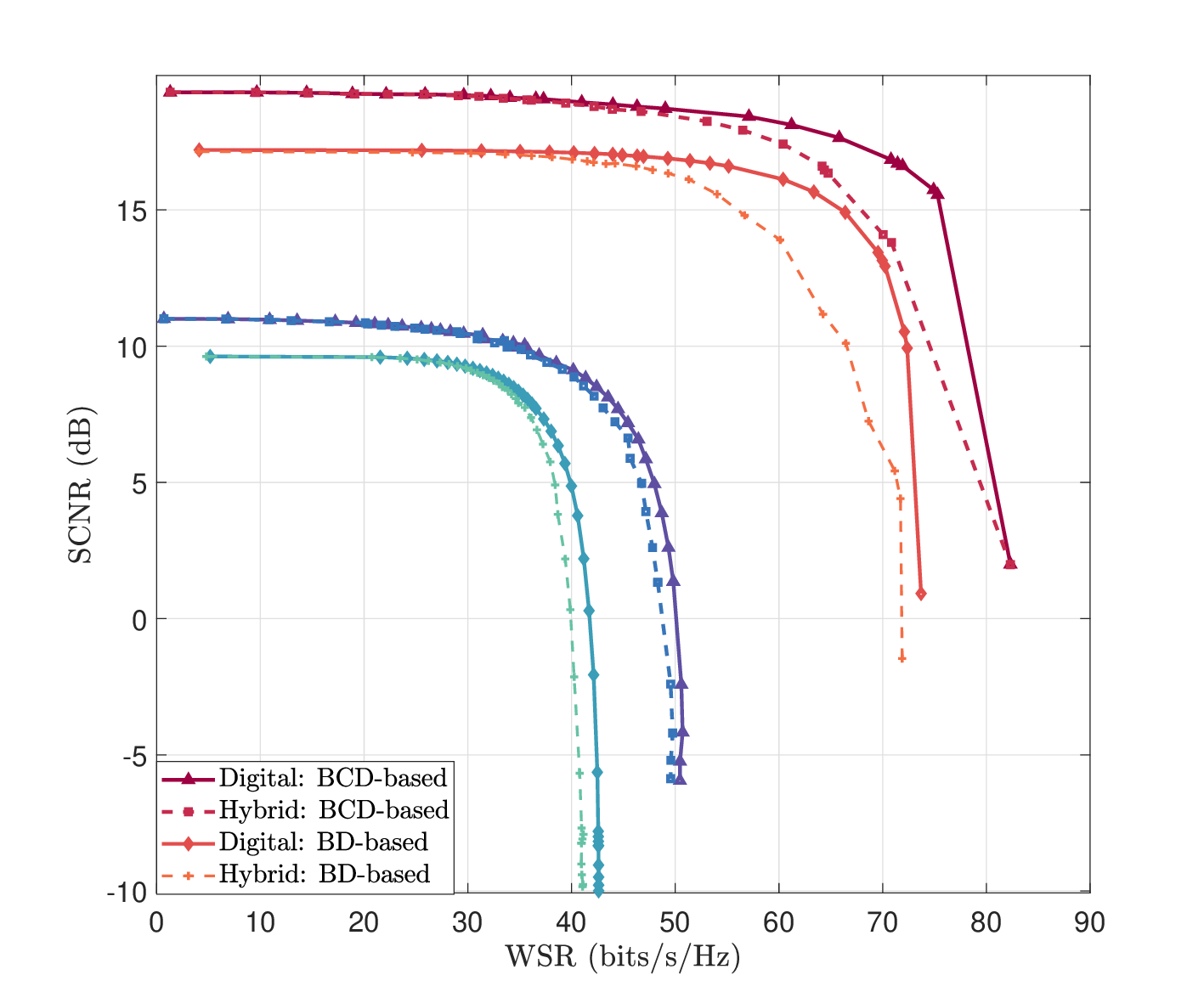}
\caption{SCNR-WSR performance region by varying $\eta$ in
$[0,1]$.} \label{fig_region}
\end{figure}

\section{Simulation Results}\label{sec-simu}
In this section, we evaluate the performance of the proposed
BCD-based method and the BD-based sub-optimal solution. Unless
otherwise stated, the simulation parameters are set as follows.
The BS, which is equipped with a uniform linear array (ULA) of
$N_t=64$ antennas, serves $K=5$ UEs. Each UE is equipped with a
ULA of $N_r=4$ antennas. The number of data streams is set to
$N_s=2$. The number of RF chains at the BS and the UE is set to
$N_{t}^{\rm RF} =KN_s = 5 \times 2 = 10$ and $N_r^{\rm RF} = N_s =
2$. The BS and the radar receiver are located at coordinates
$[20,30]^T$ and $[15,15]^T$, respectively. The angular parameters
associated with the UEs and the clutter patches are uniformly
generated from $[-\pi/2, \pi/2]$. The carrier frequency is set to
$28$ GHz. The noise power at each UE and the radar receiver is set
to $\sigma^2 = -90$ dBm.

For communication channels, the complex path gain of the LOS path
follows $\beta_{l_k} \sim \mathcal{CN}(0,10^{-0.1\kappa})$, where
$\kappa = a + 10b\log_{10} (d) + \epsilon$, with $d$ being the
distance between the BS and the $k$th UE and $\epsilon \in
\mathcal{CN}(0,\sigma_{\epsilon}^2)$ \cite{SunRappaort18}. Here we
set $a=61.4$, $b=2$, $\sigma_{\epsilon} = 5.8$ dB as suggested in
\cite{SunRappaort18}. For non-line-of-sight (NLOS) paths, the
complex gain follows a complex Gaussian distribution $\beta_{k,i}
\sim \mathcal{CN}(0,10^{-0.1(\kappa + \mu)})$ with $\mu=7$ dB
denoting the Rician factor \cite{xie2022Perceptive}. For sensing
response matrices, we set $\sigma_A^2/\sigma^2 = 20$ dB and
$\frac{1}{I} \sum_{i=1}^I \sigma_{B_i}^2/ \sigma^2 = 40$ dB for
the target and clutter patches, respectively. The weighting
coefficient for each UE is set to $w_k=1$.


\subsection{System Performance}
Fig. \ref{fig_converge} depicts the convergence behaviour of our
proposed BCD-based method, where we set the transmit power $P_t =
30$ dBm and the weighting coefficient $\eta=0.5$. It can be seen
that the proposed BCD-based method has a relatively fast
convergence rate and attains the maximum objective function value
within only $7$ iterations. Moreover, we observe that the
convergence rate is independent of the number of antennas at the
BS, i.e., $N_t$, which is a desirable characteristic for
large-scale mmWave systems.



\begin{figure*}[t]
 \centering
 \subfigure[Beam pattern for the first UE.]{
  \begin{minipage}{.45\textwidth}
   \centering
   \includegraphics[scale=.34]{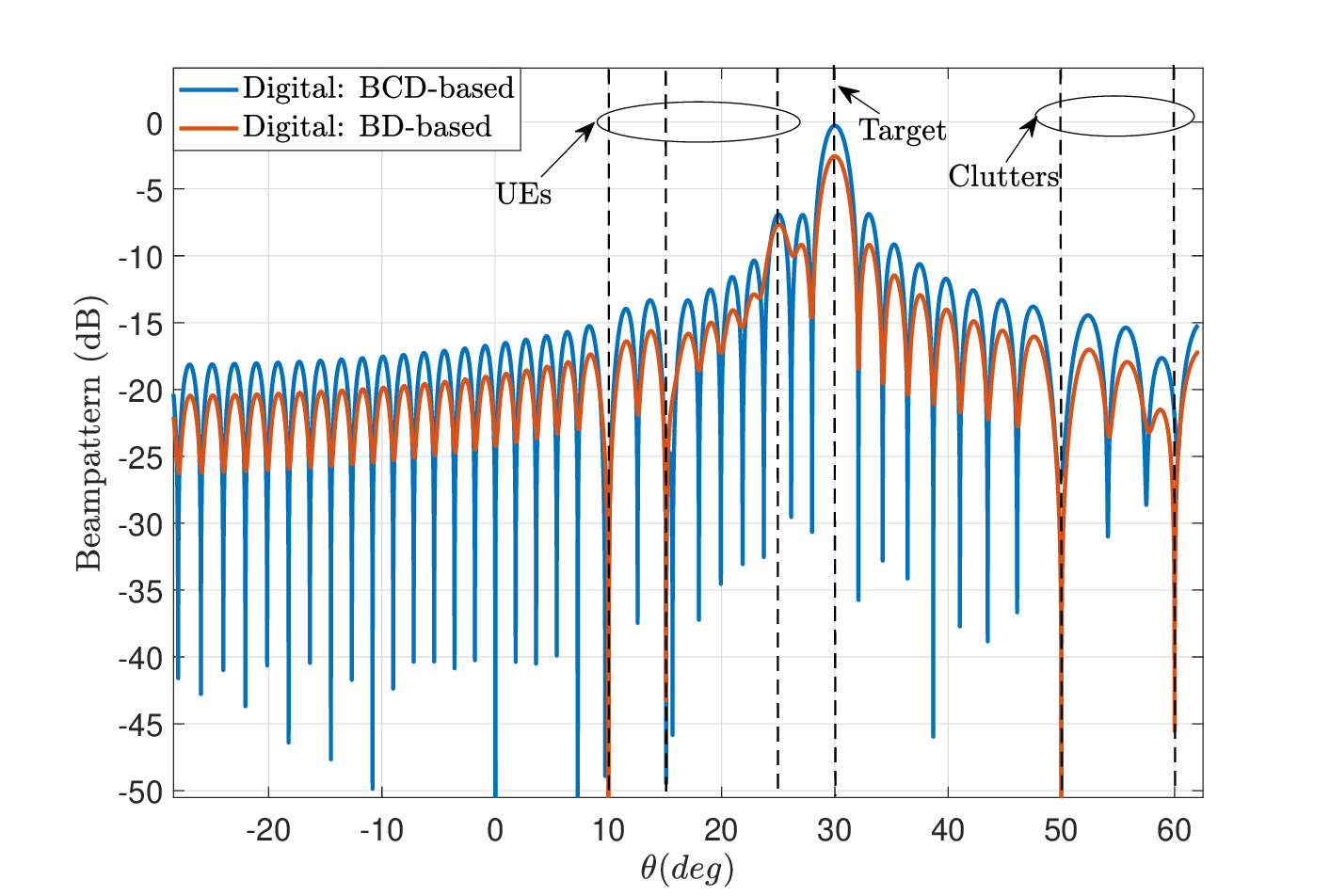}
  \end{minipage}
  \label{FIG7_1}
 }
 \subfigure[Beam pattern for the second UE.]{
  \begin{minipage}{.45\textwidth}
   \centering
   \includegraphics[scale=.34]{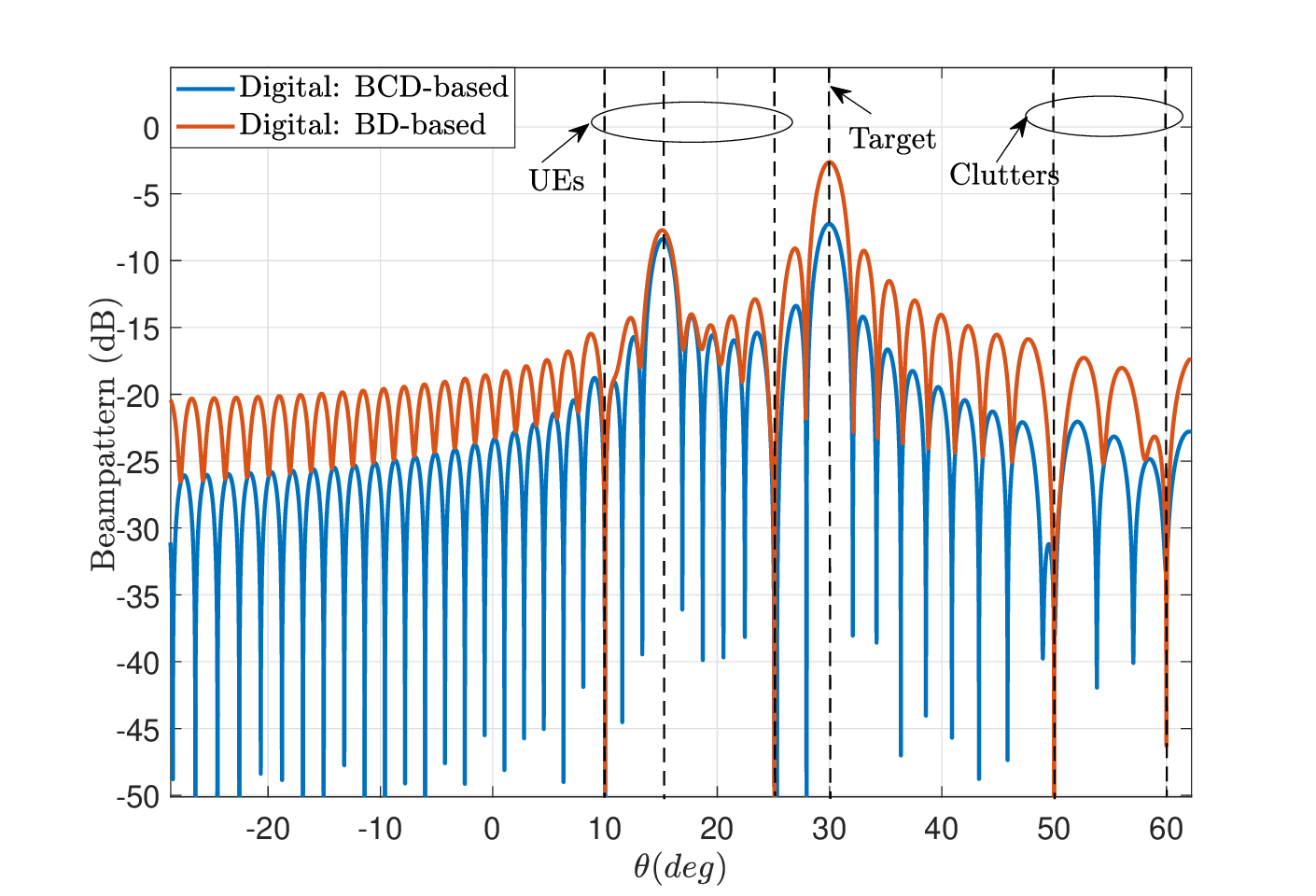}
  \end{minipage}
  \label{FIG7_2}
 }
 \hfill
 \subfigure[Beam pattern for the third UE.]{
  \begin{minipage}{.45\textwidth}
   \centering
   \includegraphics[scale=.34]{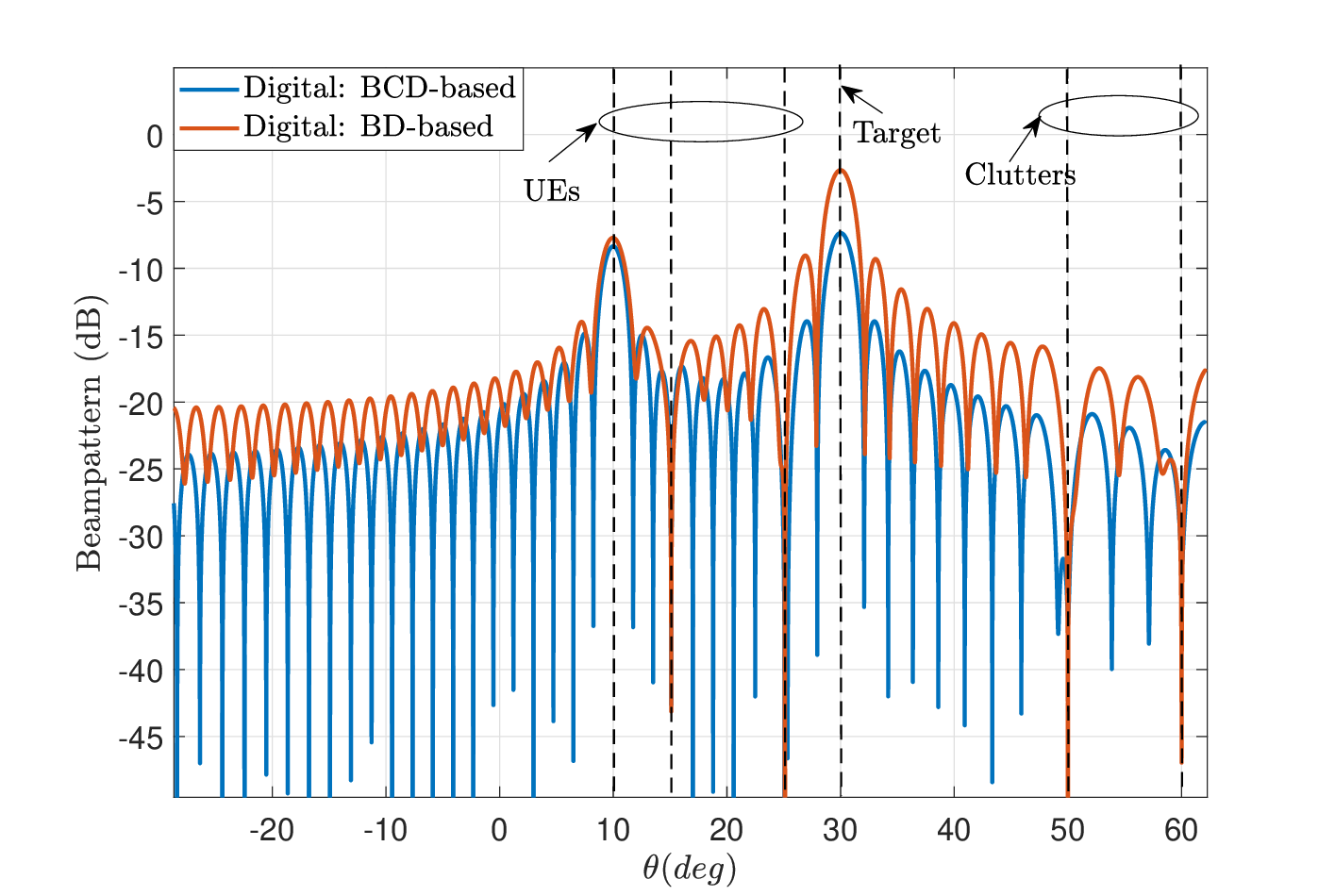}
  \end{minipage}
  \label{FIG7_3}
 }
 \subfigure[Overall beam pattern.]{
  \begin{minipage}{.45\textwidth}
   \centering
   \includegraphics[scale=.34]{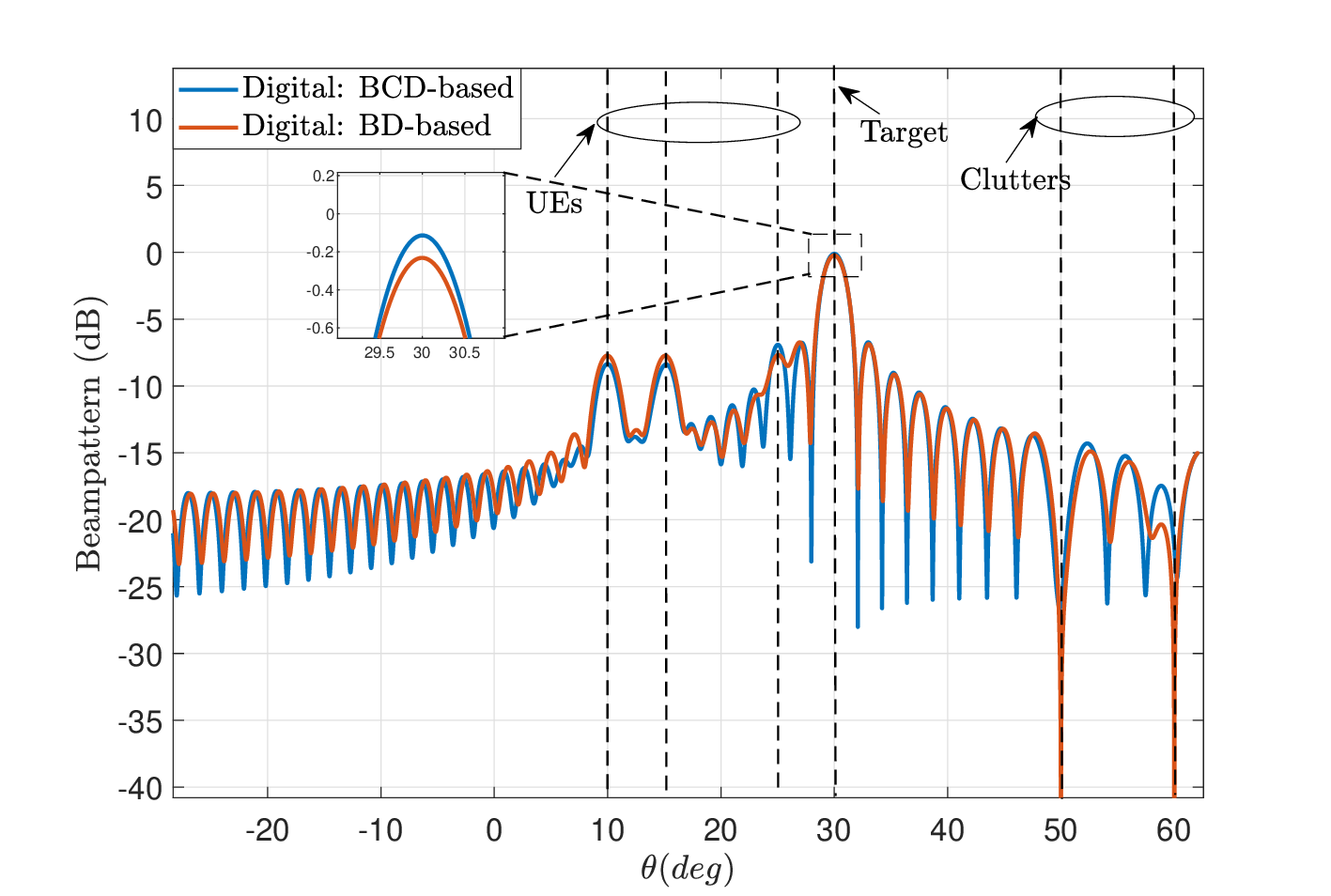}
  \end{minipage}
  \label{FIG7_4}
 }
  \caption{Transmit beam pattern for different UEs.}
 \label{fig_pattern}
\end{figure*}




In Fig. \ref{fig_Pt}(a), we plot the overall communication-sensing
performance as a function of the transmit power $P_t$, where the
weighting coefficient $\eta$ is set to $0.5$. Also, we depict the
WSR and SCNR in Fig. \ref{fig_Pt}(b) and Fig. \ref{fig_Pt}(c),
respectively. We see that, as expected, the performance increases
as the transmit power $P_t$ grows. Also, it is observed that the
performance of using the hybrid precoder/combiner is close to the
performance attained by fully digital precoder/combiner.
Furthermore, the proposed BCD-based method can deliver a better
communication/sensing performance than the BD-based solution, and
the performance improvement is more pronounced as the transmit
power increases.

In Fig. \ref{fig_NtRF}, we depict the overall
communication-sensing performance of the proposed BCD method
versus the number of RF chains $N_t^{\rm RF}$, where the transmit
power is set to $P_t = 30$ dBm. To better illustrate the
performance, the weighting coefficient is set to $\eta = 1$ and
$\eta = 0.55$, respectively. The total number of resolvable paths
at the BS is set to $r=18$, in which the number of clutter patches
is set to $I=2$. To illustrate how many RF chains are required,
the performance attained by the BCD method with a fully digital
precoder is also included for a comparison. We see that when $\eta
= 1$, which implies only communication performance is concerned,
the BCD method with a hybrid precoder achieves the same
performance as that of a fully digital precoder using only
$N_t^{\rm RF}=r-I-1=15$ RF chains. When $\eta=0.55$, the BCD
method with a hybrid precoder requires $\tilde{r}=r-I=16$ RF
chains to achieve a performance similar to that of a fully digital
precoder. This result is attributed to the fact that the radar
receive beamforming vector $\boldsymbol{w}$ has effectively
eliminate the interference caused by the clutter patches. These
results corroborate our theoretical result in Proposition
\ref{proposition-hybrid-suff} which states that $\tilde{r}$ RF
chains are sufficient to achieve the same performance as that of a
fully digital precoder.

Fig. \ref{fig_region} illustrates the tradeoff between the
communication and sensing performance by varying the weighting
coefficient $\eta$ in $[0,1]$, where the transmit power is set to
$P_t = 20$ and $P_t = 30$ dBm, respectively. From Fig.
\ref{fig_region}, we see that by choosing a proper weighting
coefficient $\eta^{\ast}$, a good balance between the
communication and sensing performance can be achieved.
Specifically, for such a value of $\eta^{\ast}$, both
communication and sensing achieve a decent performance that incurs
only a mild performance loss (less than $20\%$) as compared with
the performance attained by optimizing exclusively for a single
task. Also, a further increase (resp. decrease) of $\eta^{\ast}$
only leads to a small improvement in communication (resp. sensing)
performance, but results in a substantial sensing (resp.
communication) performance degradation.



\subsection{Beam Pattern Analysis}
To gain insight into the optimized precoder, we examine their
transmit beam patterns. In our simulations, we set $\eta = 0.5$,
$K=3$ and $N_s=1$. For simplicity of illustration, the directions
of UEs, target and the clutter patches are fixed as $\{
10^{\circ}, 15^{\circ}, 25^{\circ}\}$, $30^{\circ}$, and $\{
50^{\circ}, 60^{\circ} \}$, respectively.
Fig.\ref{fig_pattern}(a)-Fig.\ref{fig_pattern}(c) depict the beam
pattern for each UE, in which the beam pattern for the $k$th UE is
defined as $P_k(\theta) = \| \boldsymbol{a}_t^H(\phi_0)
\boldsymbol{F}_k \|^2$. The transmit power is set to $P_t = 30$
dBm.

It can be observed in Fig. \ref{fig_pattern} that for both
methods, the optimized radiation pattern forms directional beams
towards the target and the served user, and creates nulls in
directions pointing to other users as well as the clutter patches.
Such a beam pattern can effectively increase the desired
communication/radar signal while suppressing interference caused
by other users and clutter patches, which is beneficial for
improving both sensing and communication performance. We also
observe that, despite similar beam patterns, these two methods are
slightly different in their power allocation strategies. For the
BCD-based method, it tends to allocate more power to the user that
is near to the target. Apparently, assigning more power to this
user not only helps increase the communication performance, but
also improve the sensing accuracy. In contrast, the BD-based
analytical solution tends to allocate nearly the same power to
each UE. This phenomenon is consistent with the results reported
in Proposition \ref{proposition-power}.

\section{Conclusion}\label{sec-conclu}
In this paper, we studied the problem of joint transceiver design
for mmWave/THz ISAC systems. Such a problem was formulated into a
non-convex optimization problem whose objective is to maximize the
sum of all communication users' rates and the radar's received
SCNR. By exploring the low-dimensional subspace property of the
optimal precoder, we developed a computationally efficient
BCD-based algorithm for joint transceiver design. In addition, by
generalizing the BD idea to the ISAC system, we proposed an
analytical solution to the joint transceiver design problem. Simulation
results were provided to illustrate the effectiveness of the
proposed methods. Specifically, we showed that by choosing a
proper weighting coefficient, the communication and sensing
performance can be well balanced, with the performance of each
task incurring only a mild performance degradation as compared
with the performance attained by  optimizing exclusively for a
single task.


\appendices
\section{Proof of Proposition \ref{proposition1}}
\label{appendixA}

\begin{figure*}[!t]
\normalsize
\setcounter{MYtempeqncnt}{\value{equation}}
\setcounter{equation}{76}
\begin{align}
\label{eqn-big}
& \sum_{k=1}^K \sum_{i=1}^K \rho_c\alpha_{k}(\boldsymbol{F}_i^{\star})^H
\boldsymbol{H}_k^H \boldsymbol{W}_k \bigg( \sum_{j=1}^K\boldsymbol{W}_k^H
\boldsymbol{H}_k \boldsymbol{F}_j^{\star}(\boldsymbol{F}_j^{\star})^H \boldsymbol{H}_k^H
\boldsymbol{W}_k + \sigma_k^2 \boldsymbol{W}_k^H\boldsymbol{W}_k  \bigg)^{-1}
\boldsymbol{W}_k^H  \boldsymbol{H}_k\boldsymbol{F}_i^{\star} + \frac{\rho_s }{f_2^2}
\sum_{i=1}^K f_2 {\rm tr} \left((\boldsymbol{F}_i^{\star})^H
\boldsymbol{A}^H\boldsymbol{w}\boldsymbol{w}^H \boldsymbol{A} \boldsymbol{F}_i^{\star}  \right) \nonumber \\
=& \sum_{k=1}^K  \sum_{i\neq k}^K \rho_c \alpha_{k}(\boldsymbol{F}_i^{\star})^H
\boldsymbol{H}_k^H \boldsymbol{W}_k \bigg( \sum_{j\neq k}^K\boldsymbol{W}_k^H
\boldsymbol{H}_k \boldsymbol{F}_j^{\star}(\boldsymbol{F}_j^{\star})^H \boldsymbol{H}_k^H \boldsymbol{W}_k
+ \sigma_k^2 \boldsymbol{W}_k^H\boldsymbol{W}_k  \bigg)^{-1}  \boldsymbol{W}_k^H
\boldsymbol{H}_k\boldsymbol{F}_i^{\star} \nonumber \\
&+   \frac{\rho_s }{f_2^2} \sum_{i=1}^K f_{1,i} {\rm tr} \left((\boldsymbol{F}_i^{\star})^H
\sum_{m=1}^I\boldsymbol{B}_m^H\boldsymbol{w}\boldsymbol{w}^H \boldsymbol{B} _m\boldsymbol{F}_i^{\star} \right)
\end{align}
\begin{align}
\label{eq-tr-fi}
&\sum_{k=1}^K \rho_c w_k \sigma_k^2 {\rm tr}\bigg(  (\boldsymbol{W}_k^H\boldsymbol{W}_k )^{-1}
\big(\sum_{j=1}^K\boldsymbol{W}_k^H \boldsymbol{H}_k \boldsymbol{F}_j^{\star}(\boldsymbol{F}_j^{\star})^H
\boldsymbol{H}_k^H \boldsymbol{W}_k + \sigma_k^2 \boldsymbol{W}_k^H\boldsymbol{W}_k  \big)^{-1} \bigg)
+\frac{\rho_s }{f_2^2} \sum_{i=1}^K f_{1,i} \sum_{m=1}^M{\rm tr} \left((\boldsymbol{F}_i^{\star})^H
\boldsymbol{B}_m^H\boldsymbol{w}\boldsymbol{w}^H \boldsymbol{B}_m \boldsymbol{F}_i^{\star} \right)  \nonumber \\
=& \sum_{k=1}^K \rho_c w_k \sigma_k^2 {\rm tr}\bigg(
(\boldsymbol{W}_k^H\boldsymbol{W}_k )^{-1} \big(\sum_{j\neq
k}^K\boldsymbol{W}_k^H \boldsymbol{H}_k
\boldsymbol{F}_j^{\star}(\boldsymbol{F}_j^{\star})^H
\boldsymbol{H}_k^H \boldsymbol{W}_k + \sigma_k^2
\boldsymbol{W}_k^H\boldsymbol{W}_k  \big)^{-1} \bigg)
  +   \frac{\rho_s }{f_2^2} \sum_{i=1}^K f_2 {\rm tr} \left((\boldsymbol{F}_i^{\star})^H
\boldsymbol{A}^H\boldsymbol{w}\boldsymbol{w}^H \boldsymbol{A} \boldsymbol{F}_i^{\star}  \right)
\end{align}
\setcounter{equation}{\value{MYtempeqncnt}}
\hrulefill
\vspace*{4pt}
\end{figure*}

To demonstrate the low-dimensional subspace property, we first
introduce the definition of trivial points and then analyze the
KKT conditions of problem \eqref{eq-opt-simp}. Specifically, if a
point $\{ \boldsymbol{F}_k \}_{k=1}^K$ satisfying
$\boldsymbol{H}_k \boldsymbol{F}_k = \boldsymbol{0}$ and
$\boldsymbol{A} \boldsymbol{F}_k = \boldsymbol{0}$, which results
in a zero WSR and a zero SCNR, we say it is a trivial point of
problem \eqref{eq-opt-simp}. Intuitively, any effective optimized
solution to problem \eqref{eq-opt-simp} should be a non-trivial
KKT point. Then, we have the following proposition regarding the
dual variable $\lambda^{\star}$.
\begin{Proposition}
For any non-trivial KKT point of problem \eqref{eq-opt-simp}, the
dual variable $\lambda^{\star}$ associated with the transmit power
constraint must be positive, i.e., $\lambda^{\star} > 0$.
\label{propo-lamb}
\end{Proposition}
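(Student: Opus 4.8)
The plan is to argue by contradiction from the KKT system of \eqref{eq-opt-simp}. Attaching a multiplier $\lambda^{\star}\geq 0$ to the power budget and using the convention that $\nabla_{\boldsymbol{F}_k}$ denotes the (Wirtinger) gradient with respect to $\boldsymbol{F}_k^{H}$, so that $\nabla_{\boldsymbol{F}_k}{\rm tr}(\boldsymbol{F}_k\boldsymbol{F}_k^{H})=\boldsymbol{F}_k$, the stationarity condition at a KKT point reads $\rho_c\nabla_{\boldsymbol{F}_k}R+\rho_s\nabla_{\boldsymbol{F}_k}\text{SCNR}=\lambda^{\star}\boldsymbol{F}_k^{\star}$ for all $k$, together with $\lambda^{\star}\geq 0$ and complementary slackness. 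I would suppose $\lambda^{\star}=0$, so that the entire objective gradient vanishes at $\{\boldsymbol{F}_k^{\star}\}$. The idea is then to left-multiply each stationarity equation by $(\boldsymbol{F}_k^{\star})^{H}$, take the trace, sum over $k$, and take the real part; the right-hand side collapses to $\lambda^{\star}\sum_k\|\boldsymbol{F}_k^{\star}\|_F^{2}=0$. It remains to evaluate the two resulting contractions and show each is nonnegative, so that their vanishing forces a trivial point.

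For the rate term I would write $R_k=w_k(\log_2{\rm det}\,\boldsymbol{T}_k-\log_2{\rm det}\,\boldsymbol{J}_k)$, with $\boldsymbol{J}_k$ as in \eqref{eq-Jk} and the full received covariance $\boldsymbol{T}_k\triangleq\boldsymbol{J}_k+\boldsymbol{W}_k^{H}\boldsymbol{H}_k\boldsymbol{F}_k\boldsymbol{F}_k^{H}\boldsymbol{H}_k^{H}\boldsymbol{W}_k$. Differentiating through both $\boldsymbol{T}_k$ (which contains every $\boldsymbol{F}_i$) and $\boldsymbol{J}_k$ (which omits $\boldsymbol{F}_k$), contracting with $(\boldsymbol{F}_k^{\star})^{H}$, and telescoping the double sums via $\boldsymbol{W}_k^{H}\boldsymbol{H}_k(\sum_i\boldsymbol{F}_i\boldsymbol{F}_i^{H})\boldsymbol{H}_k^{H}\boldsymbol{W}_k=\boldsymbol{T}_k-\sigma_k^{2}\boldsymbol{W}_k^{H}\boldsymbol{W}_k$ and the analogous identity for $\boldsymbol{J}_k$ — precisely the bookkeeping recorded in \eqref{eqn-big}--\eqref{eq-tr-fi} — yields
\begin{align}
\sum_{k=1}^K\Re\,{\rm tr}\big[(\boldsymbol{F}_k^{\star})^{H}\nabla_{\boldsymbol{F}_k}R\big]=\frac{1}{\ln 2}\sum_{k=1}^K w_k\sigma_k^{2}\,{\rm tr}\big((\boldsymbol{J}_k^{-1}-\boldsymbol{T}_k^{-1})\boldsymbol{W}_k^{H}\boldsymbol{W}_k\big).
\end{align}
Since $\boldsymbol{T}_k\succeq\boldsymbol{J}_k\succ 0$ implies $\boldsymbol{J}_k^{-1}\succeq\boldsymbol{T}_k^{-1}$ and $\boldsymbol{W}_k^{H}\boldsymbol{W}_k\succeq 0$, this trace is nonnegative, and, with $\boldsymbol{W}_k$ of full column rank so that $\boldsymbol{W}_k^{H}\boldsymbol{W}_k\succ 0$, it vanishes iff $\boldsymbol{T}_k=\boldsymbol{J}_k$, i.e. $\boldsymbol{W}_k^{H}\boldsymbol{H}_k\boldsymbol{F}_k^{\star}=\boldsymbol{0}$. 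For the sensing term, writing $\text{SCNR}=f_1/f_2$ as in \eqref{eq-opt-SCNR}, the quotient rule gives $\nabla_{\boldsymbol{F}_k}\text{SCNR}=f_2^{-1}\boldsymbol{A}^{H}\boldsymbol{w}\boldsymbol{w}^{H}\boldsymbol{A}\boldsymbol{F}_k-f_1 f_2^{-2}\sum_i\boldsymbol{B}_i^{H}\boldsymbol{w}\boldsymbol{w}^{H}\boldsymbol{B}_i\boldsymbol{F}_k$, and contracting while using $\sum_i\boldsymbol{w}^{H}\boldsymbol{B}_i\boldsymbol{F}\boldsymbol{F}^{H}\boldsymbol{B}_i^{H}\boldsymbol{w}=f_2-\sigma^{2}\|\boldsymbol{w}\|_2^{2}$ collapses everything to $f_1\sigma^{2}\|\boldsymbol{w}\|_2^{2}/f_2^{2}\geq 0$, which vanishes iff $f_1=\boldsymbol{w}^{H}\boldsymbol{A}\boldsymbol{F}^{\star}(\boldsymbol{F}^{\star})^{H}\boldsymbol{A}^{H}\boldsymbol{w}=0$.

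Assembling the pieces, under $\lambda^{\star}=0$ the $\rho_c,\rho_s>0$ weighted sum of these two nonnegative quantities equals zero, so each must vanish individually. Hence $\boldsymbol{W}_k^{H}\boldsymbol{H}_k\boldsymbol{F}_k^{\star}=\boldsymbol{0}$ for all $k$ and $\boldsymbol{w}^{H}\boldsymbol{A}\boldsymbol{F}^{\star}=\boldsymbol{0}$, which by \eqref{eq-Rk} and \eqref{eq-SCNR} makes every $R_k$ and the SCNR identically zero; that is, $\{\boldsymbol{F}_k^{\star}\}$ attains zero objective and is a trivial point, contradicting the hypothesis. Therefore $\lambda^{\star}>0$. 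I expect the main obstacle to be the rate contraction: correctly differentiating the $\log{\rm det}$ through both the signal-plus-interference covariance $\boldsymbol{T}_k$ and the interference covariance $\boldsymbol{J}_k$ (being careful that the $k$th stream is absent from the latter), telescoping the resulting double sums over $\sum_i\boldsymbol{F}_i\boldsymbol{F}_i^{H}$, and then certifying both the nonnegativity and the exact equality condition through the operator antitonicity of matrix inversion; the SCNR contraction and the final sign argument are comparatively routine.
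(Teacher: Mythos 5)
Your proposal is correct and takes essentially the same route as the paper's own proof: assume $\lambda^{\star}=0$, left-multiply the stationarity condition by $(\boldsymbol{F}_k^{\star})^H$, sum over $k$, telescope the covariance sums, and show that the resulting rate and SCNR contractions are each nonnegative and vanish only when $\boldsymbol{W}_k^H\boldsymbol{H}_k\boldsymbol{F}_k^{\star}=\boldsymbol{0}$ and $f_1=0$, i.e., at a point with zero WSR and zero SCNR — which is precisely the content of \eqref{eqn-big}--\eqref{eq-tr-fi}. The only differences are cosmetic: you package the identity as a single sum of two nonnegative quantities (with the rate term reduced to ${\rm tr}\big((\boldsymbol{J}_k^{-1}-\boldsymbol{T}_k^{-1})\boldsymbol{W}_k^H\boldsymbol{W}_k\big)\geq 0$ via operator antitonicity) instead of comparing the two sides of an equality, and your quotient-rule gradient correctly carries the full $f_1$ where the paper's \eqref{grad-Fk} writes $f_{1,k}$.
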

\begin{proof}
Denote $\{ \boldsymbol{F}_k ^{\star}\}_{k=1}^K$ as the KKT point
of the problem \eqref{eq-opt-simp}, which satisfies
\begin{align}
&\rho_c w_k \nabla_{\boldsymbol{F}_k^{\star}} R_k(\boldsymbol{F}^{\star})+
\rho_c\sum_{i \neq k}^K  w_i \nabla_{\boldsymbol{F}_k^{\star}} R_i(\boldsymbol{F}^{\star})  \nonumber \\
&+\rho_s\nabla_{\boldsymbol{F}_k^{\star}} \text{SCNR}(\boldsymbol{F}^{\star})
-\lambda^{\star} \boldsymbol{F}_k^{\star}=\boldsymbol{0}, \forall k, \label{condi-first} \\
& \left(\sum_{k=1}^K \operatorname{Tr}\left(\boldsymbol{F}_k^{\star}
\left(\boldsymbol{F}_k^{\star}\right)^H\right)-P_{t}\right) \cdot \lambda^{\star}=0, \label{condi-2nd}\\
& \sum_{k=1}^K \operatorname{Tr}\left(\boldsymbol{F}_k^{\star}
\left(\boldsymbol{F}_k^{\star}\right)^H\right) \leq P_{\max }, \\
& \lambda^{\star} \geq 0,
\end{align}
where $R_k(\boldsymbol{F}^{\star})$ and
$\text{SCNR}(\boldsymbol{F}^{\star})$ are defined in \eqref{eq-Rk}
and \eqref{eq-SCNR} in terms of $\boldsymbol{F}^{\star} =
[\boldsymbol{F}_1^{\star} \phantom{0} \ldots \phantom{0}
\boldsymbol{F}_K^{\star}] \in \mathbb C^{N_t \times KN_s}$,
respectively.

We prove $\lambda^{\star} > 0$ by contradiction. We first assume
$\lambda^{\star} = 0$. Note that the gradient of
$\nabla_{\boldsymbol{F}_k} R_k(\boldsymbol{F}^{\star})$,
$\nabla_{\boldsymbol{F}_k} R_i(\boldsymbol{F}^{\star})$, and
$\nabla_{\boldsymbol{F}_k^{\star}} \text{SCNR}
(\boldsymbol{F}^{\star})$ can be respectively calculated as
\begin{align}
\nabla_{\boldsymbol{F}_k^{\star}} R_k(\boldsymbol{F}^{\star}) =&
\boldsymbol{H}_k^H \boldsymbol{W}_k \boldsymbol{Z}_{k}^{-1}  \boldsymbol{W}_k^H
\boldsymbol{H}_k\boldsymbol{F}_k^{\star}, \label{grad-Rk} \\
 \nabla_{\boldsymbol{F}_k^{\star}} R_i(\boldsymbol{F}^{\star}) =& \boldsymbol{H}_i^H
 \boldsymbol{W}_i (\boldsymbol{Z}_i^{-1}  -  \tilde{\boldsymbol{Z}}_i^{-1} )
 \boldsymbol{W}_i^H  \boldsymbol{H}_i\boldsymbol{F}_k^{\star} ,\\
   \nabla_{\boldsymbol{F}_k^{\star}} \text{SCNR}(\boldsymbol{F}^{\star}) =&
   \frac{1 }{f_2^2} ( f_2 \boldsymbol{A}^H\boldsymbol{w}\boldsymbol{w}^H
   \boldsymbol{A} \boldsymbol{F}_k^{\star} \nonumber \\
    & - f_{1,k} \sum_{i=1}^I\boldsymbol{B}_i^H\boldsymbol{w}\boldsymbol{w}^H
    \boldsymbol{B}_i \boldsymbol{F}_k^{\star}), \label{grad-Fk}
\end{align}
where
\begin{align}
\boldsymbol{Z}_k =&  \sum_{j=1}^K\boldsymbol{W}_k^H
 \boldsymbol{H}_k \boldsymbol{F}_j^{\star}(\boldsymbol{F}_j^{\star})^H \boldsymbol{H}_k^H \boldsymbol{W}_k +
 \sigma_k^2 \boldsymbol{W}_k^H\boldsymbol{W}_k, \forall k,  \nonumber \\
 \tilde{\boldsymbol{Z}}_i =&
 \sum_{j\neq i}^K\boldsymbol{W}_i^H \boldsymbol{H}_i
\boldsymbol{F}_j^{\star}(\boldsymbol{F}_j^{\star})^H \boldsymbol{H}_i^H
\boldsymbol{W}_i + \sigma_i^2 \boldsymbol{W}_i^H\boldsymbol{W}_i, \forall i, \nonumber
\end{align}

$f_{1,k} = \boldsymbol{w}^H \boldsymbol{A} \boldsymbol{F}_k^{\star}
(\boldsymbol{F}_k^{\star})^H \boldsymbol{A}^H \boldsymbol{w}$ and $f_2 =
\boldsymbol{w}^H
(\sum_{i=1}^I\boldsymbol{B}_i\boldsymbol{F}^{\star}(\boldsymbol{F}^{\star})^H
\boldsymbol{B}_i^H +\sigma^2 \boldsymbol{I})\boldsymbol{w}$.
Left-multiplying the first-order optimality condition in
\eqref{condi-first} by $(\boldsymbol{F}_k^{\star})^H$, we arrive
at
\begin{align}
    &\rho_c w_k (\boldsymbol{F}_k^{\star})^H\nabla_{\boldsymbol{F}_k^{\star}}
    R_k(\boldsymbol{F}^{\star})+\rho_c \sum_{i \neq k}^K w_i (\boldsymbol{F}_k^{\star})^H
    \nabla_{\boldsymbol{F}_k^{\star}} R_i(\boldsymbol{F}^{\star}) \nonumber \\
 &+\rho_s(\boldsymbol{F}_k^{\star})^H \nabla_{\boldsymbol{F}_k^{\star}}
 \text{SCNR}(\boldsymbol{F}^{\star}) = \boldsymbol{0}. \label{eq-grad-sing}
\end{align}
Taking summation over all $k=1,2,\ldots, K$ gradients in
\eqref{eq-grad-sing} and re-arranging the terms, we could obtain
\eqref{eqn-big} at the top of this page. Using the identities
${\rm tr}(\boldsymbol{A}\boldsymbol{B}) = {\rm tr}(\boldsymbol{B}
\boldsymbol{A})$ and ${\rm
tr}(\boldsymbol{A}(\boldsymbol{A}+\boldsymbol{I})^{-1}) = {\rm
tr}( \boldsymbol{I}) - {\rm
tr}(\boldsymbol{A}+\boldsymbol{I})^{-1} $, we can further obtain
\eqref{eq-tr-fi}.

Comparing each term in the left-hand-side and right-hand-side of
\eqref{eq-tr-fi}, it can be seen that the equality holds iff
$\boldsymbol{H}_k\boldsymbol{F}_k^{\star}=\boldsymbol{0},\forall k$ and
$\boldsymbol{A}\boldsymbol{F}_k^{\star} = \boldsymbol{0},\forall k$, which
contradicts the fact that $\boldsymbol{F}_k^{\star}$ is a
non-trivial point. Therefore, we conclude that for any given
non-trivial point, the corresponding dual variable
$\lambda^{\star} >0$. Also, the full power property can thus be
deduced by checking the slack-complementary condition in
\eqref{condi-2nd}.

\setcounter{equation}{78} Given $\lambda^{\star}
>0$, we then prove that the corresponding
$\boldsymbol{F}_k^{\star}$ must be a non-trivial point. Similarly,
we prove it by contradiction. If
$\boldsymbol{H}_k\boldsymbol{F}_k^{\star}=\boldsymbol{0},\forall
k$ and $\boldsymbol{A}\boldsymbol{F}_k^{\star} =
\boldsymbol{0},\forall k$, the gradients in \eqref{grad-Rk} to
\eqref{grad-Fk} are equal to zero. In this case, the
first-order optimality condition in \eqref{condi-first} reduces to
$\lambda^{\star} \boldsymbol{F}_k^{\star} = \boldsymbol{0}$, which
implies $\boldsymbol{F}_k^{\star} = \boldsymbol{0}$. This
contradicts \eqref{condi-2nd}. Therefore, we can conclude
that for a non-trivial point
$\{\boldsymbol{F}_k^{\star}\}_{k=1}^K$, the corresponding dual
variable $\lambda^{\star} > 0$.
\end{proof}

Looking back on
\eqref{condi-first} and noting $\lambda^{\star} >0$, the precoder $\boldsymbol{F}_k^{\star}$ can
be represented as
\begin{align}
\boldsymbol{F}_k^{\star} =& \frac{1}{\lambda} \bigg(\rho_c w_k
\nabla_{\boldsymbol{F}_k^{\star}} R_k(\boldsymbol{F}^{\star}) +
\sum_{i \neq k}^K \rho_c w_i \nabla_{\boldsymbol{F}_k^{\star}} R_i(\boldsymbol{F}^{\star}) \nonumber \\
 & +\rho_s\nabla_{\boldsymbol{F}_k^{\star}} {\text{SCNR}}(\boldsymbol{F}^{\star})  \bigg)
 =\boldsymbol{H}^H \boldsymbol{G}_k
\end{align}
with
\begin{align}
   & \boldsymbol{H}\triangleq [\boldsymbol{H}_1^T\phantom{0}
\ldots\phantom{0} \boldsymbol{H}_K^T\phantom{0}
\boldsymbol{A}^T\phantom{0} \boldsymbol{B}_1^T\phantom{0}
\ldots\phantom{0}\boldsymbol{B}_{I}^T]^T \in \mathbb C^{M \times
N_t}, \nonumber\\
 &\boldsymbol{G}_k  \in \mathbb C^{M \times N_{s}} ,    \quad M = KN_r + (I+1)N_{\rm sen}, \nonumber
\end{align}
It is seen that $\boldsymbol{F}_k^{\star}$ must lie in the range
space of $\boldsymbol{H}^H$. Let $\boldsymbol{V} \in \mathbb
C^{N_t \times r} $ denote the matrix comprising the steering
vectors from the BS to all $K$ UEs, the target and all clutters
with $r= \sum_{k=1}^K L_k + I +1 $, it is easy to verify that
$\boldsymbol{V}$ and $\boldsymbol{H}^H$ have the same range
space\footnote{If either $\rho_c$ or $\rho_s$ is zero, the corresponding paths are irrelevant for either communication or
sensing. Therefore, we can simply remove them from the matrix
$\boldsymbol{V}$. This reduces the dimension of $\boldsymbol{V}$
to either $I+1$ (if only communication paths are irrelevant) or
$r-I-1$ (if only sensing paths are irrelevant).}. In other words,
the digital precoder $\boldsymbol{F}_k^{\star} , \forall k$ can be
represented by
       \begin{align}
        \boldsymbol{F}_k^{\star} = \boldsymbol{V} \boldsymbol{X}_k, \label{Fk-Xk}
    \end{align}
This completes the proof.

\section{Proof of Proposition \ref{proposition-scaling}}
\label{appendix-scaling}
Firstly, it is easy to see that $(\{\boldsymbol{W}_k^{\rm
opt_2},\alpha\boldsymbol{X}_k^{\rm opt_2} \}_{k=1}^K,
\boldsymbol{w}^{\rm opt_2})$ satisfies the transmit power
constraint in \eqref{eq-opt-X}, which shows it is a feasible
solution of \eqref{eq-opt-X}. Next, we show that
$(\{\boldsymbol{W}_k^{\rm opt_2},\alpha^{}
\boldsymbol{X}_k^{\rm opt_2} \}_{k=1}^K, \boldsymbol{w}^{\rm
opt_2})$ attains the maximum objective function value in
\eqref{eq-opt-X}. Substituting $(\{\boldsymbol{W}_k^{\rm
opt_2},\alpha^{} \boldsymbol{X}_k^{\rm opt_2} \}_{k=1}^K,
\boldsymbol{w}^{\rm opt_2})$ into the objective function of
\eqref{eq-opt-X}, we arrive at
\begin{align}
& \rho_c R( \alpha^{} \boldsymbol{X}_k^{\rm opt_2} ,\boldsymbol{W}_k^{\rm opt_2})
+ \rho_s \text{SCNR}  (\boldsymbol{w}^{\rm opt_2}, \alpha \boldsymbol{X}_k^{\rm opt_2}) ,\nonumber \\
=&  \rho_c \bar{R}(  \boldsymbol{X}_k^{\rm opt_2} ,\boldsymbol{W}_k^{\rm opt_2})
+ \rho_s \overline{\text{SCNR}}  (\boldsymbol{w}^{\rm opt_2}, \boldsymbol{X}_k^{\rm opt_2}), \nonumber \\
\geq & \rho_c\bar{R}(  \boldsymbol{X}_k ,\boldsymbol{W}_k) +
\rho_s \overline{\text{SCNR}}  (\boldsymbol{w}^{}, \boldsymbol{X}_k^{}),  \nonumber \\
\stackrel{(a)}=& \rho_c {R}( \tilde{\alpha} \boldsymbol{X}_k
,\boldsymbol{W}_k) + \rho_s {\text{SCNR}}  (\boldsymbol{w}^{},
\tilde{\alpha} \boldsymbol{X}_k^{}) ,
\end{align}
where $\tilde{\alpha}$ is a scaling factor to satisfy the transmit
power constraint and is defined as
\begin{align}
\tilde{\alpha} = \sqrt{ \frac{P_t}{\sum_{k=1}^K {\rm
tr}(\tilde{\boldsymbol{H}}\boldsymbol{X}_k (\boldsymbol{X}_k^H)
}   }.
\end{align}
Hence, the point $(\{\boldsymbol{W}_k^{\rm opt_2}, \alpha^{}
\boldsymbol{X}_k^{\rm opt_2} \}_{k=1}^K, \boldsymbol{w}^{\rm
opt_2})$ is the optimal solution to \eqref{eq-opt-X}. This
completes the proof.

\bibliography{newbib}
\bibliographystyle{IEEEtran}

\end{document}